\def\C{Ch.\ }
\def\Fn{Fn.\ }
\renewcommand{\S}{Sec.\ }
\newcommand{\customlabel}[4][0]{%
	\protected@write\@auxout{}{\string\newlabel{#3}{{#4}{\thepage}{#4}{#3}{}}}%
	\protected@write\@auxout{}{\string\newlabel{#3@cref}{{[#2][#1][#1]#4}{\thepage}}}}
\crefname{rule}{rule}{rules}
\crefname{defn}{definition}{definitions}
\crefname{lem}{lemma}{lemmas}
\crefname{thm}{theorem}{theorems}
\crefname{prop}{proposition}{propositions}
\NewDocumentCommand{\RuleName}{m}{\textsc{#1}}
\NewDocumentCommand{\RuleLabel}{s o m}{%
  \IfValueTF{#2}{%
    \customlabel{Rule}{#2}{\RuleName{#3}}%
    \IfBooleanF{#1}{\hypertarget{#2}{\RuleName{#3}}}
  }{%
    \customlabel{Rule}{rule:#3}{\RuleName{#3}}%
    \IfBooleanF{#1}{\hypertarget{rule:#3}{\RuleName{#3}}}
  }}
\newcommand*{\inlineequation}[2][]{%
  \begingroup
    \refstepcounter{equation}%
    \ifx\\#1\\%
    \else
      \label{#1}%
    \fi
    \relpenalty=10000 %
    \binoppenalty=10000 %
    \ensuremath{%
      #2%
    }%
    ~\@eqnnum%
  \endgroup
}
\def\AgdaBaseUrl{https://bobatkey.github.io/semantic-cut-elimination/MFPS/2024}
\NewDocumentCommand{\AgdaModuleRef}{m}{%
  \href{%
    \AgdaBaseUrl/#1.html}{%
    \texttt{#1}}}
\NewDocumentCommand{\AgdaRef}{s m v o}{%
  \href{%
    \AgdaBaseUrl/#2.html\##3}{%
    \texttt{%
      \IfBooleanF{#1}{#2.}%
      \IfValueTF{#4}{#4}{#3}}}}
\newcommand{\longtableheader}[1]{%
  \rowcolor{lightgray}
  \multicolumn{3}{l}{#1}}
\newcommand{\longtablemodule}[1]{%
  \multicolumn{3}{r}{%
    \small%
    (The preceding definitions can be found under \AgdaModuleRef{#1}.)}}
\newcommandPIE\va[0]{\alpha#1#2#3}
\newcommandPIE\vb[0]{\beta#1#2#3}
\newcommandPIE\vP[0]{P#1#2#3}
\newcommandPIE\vQ[0]{Q#1#2#3}
\newcommandPIE\vR[0]{R#1#2#3}
\newcommandPIE\vS[0]{S#1#2#3}
\newcommandPIE\vC[0]{\mathcal{C}#1#2#3}
\newcommandPIE\vD[0]{\mathcal{D}#1#2#3}
\newcommandPIE\vGG[0]{\Gamma#1#2#3}
\newcommandPIE\vGD[0]{\Delta#1#2#3}
\newcommandPIE\vSS[0]{\mathcal{S}#1#2#3}
\newcommand{\vDual}[1]{\overline{#1}}
\newcommand{\vPos}[1]{#1}
\newcommand{\vNeg}[1]{\vDual{#1}}
\newlength{\vPWidth}
\def\vUnit{%
  \ensuremath{%
    \mathord{%
      \makebox[\vPWidth][c]{\ensuremath{\mathnormal{I}}}}}}
\newlength{\vlpaWidth}
\newlength{\vlseWidth}
\newlength{\vOpWidth}
\pgfmathsetlength{\vOpWidth}{max(\vlpaWidth,\vlseWidth)}
\def\vTens{%
  \ensuremath{%
    \mathbin{%
      \makebox[\vOpWidth][c]{\ensuremath{\vlte}}}}}
\def\vParr{%
  \ensuremath{%
    \mathbin{%
      \makebox[\vOpWidth][c]{\ensuremath{\vlpa}}}}}
\def\vPlus{%
  \ensuremath{%
    \mathbin{%
      \makebox[\vOpWidth][c]{\ensuremath{\vlor}}}}}
\def\vWith{%
  \ensuremath{%
    \mathbin{%
      \makebox[\vOpWidth][c]{\ensuremath{\vlan}}}}}
\def\vSeq{%
  \ensuremath{%
    \mathbin{%
      \makebox[\vOpWidth][c]{\ensuremath{\vlse}}}}}
\newcommand{\vEmpty}[0]{\ensuremath{\vlhole}}
\newcommand{\vPlug}[1]{\ensuremath{\{#1\}}}
\NewDocumentCommand{\vEquiv}{s}{%
  \ensuremath{\simeq}}
\NewDocumentCommand{\vInferFrom}{s}{%
  \ensuremath{%
    \mathrel{%
      \longrightarrow\IfBooleanTF{#1}{^\star}{}}}}
\NewDocumentCommand{\vInferTo}{s}{%
  \ensuremath{%
    \mathrel{%
      \longleftarrow\IfBooleanTF{#1}{^\star}{}}}}
\NewDocumentCommand{\vInferFromTo}{s}{%
  \ensuremath{%
    \mathrel{%
      \longleftrightarrow\IfBooleanTF{#1}{^\star}{}}}}
\newcommand{\LowerSet}[1]{\widehat{#1}}
\newcommand{\Day}[1]{\mathop{\widehat{#1}}}
\newcommand{\ClosedLowerSet}[1]{\widehat{#1}^+}
\newcommand{\ClosedDay}[1]{\mathop{\widehat{#1}^+}}
\newcommand{\Chu}{\mathrm{Chu}}
\newcommand{\op}{\mathrm{op}}
\newcommand{\NMAV}{\textsc{NMav}\xspace}
\newcommand{\sem}[1]{\llbracket #1 \rrbracket}
\newcommand{\aDual}[1]{\neg\IfSingleTF{#1}{#1}{(#1)}}
\newcommand{\ChuEmbed}{\eta^c}
\newcommand{\ClosedLowerEmbed}{\eta^+}
\newcommand{\LowerEmbed}{\eta}
\begin{document}

\begin{frontmatter}
  \title{A Semantic Proof of\\[1ex] Generalised Cut Elimination for Deep Inference\thanksref{ALL}}
  \thanks[ALL]{This work
    was funded by the \href{https://www.ukri.org/about-us/epsrc/}{Engineering and Physical
      Sciences Research
      Council}: Grant number
    \href{https://gow.epsrc.ukri.org/NGBOViewGrant.aspx?GrantRef=EP/T026960/1}{EP/T026960/1, \emph{AISEC: AI Secure and
        Explainable by Construction}.}}
  \author{Robert Atkey\thanksref{msp}\thanksref{bobemail}}
  \author{Wen Kokke\thanksref{wenemail}}
  \address[msp]{%
    Mathematically Structured Programming Group\\
    Computer and Information Sciences\\
    University of Strathclyde\\
    Glasgow, Scotland, UK}
  \thanks[bobemail]%
  {Email: \href{robert.atkey@strath.ac.uk}%
    {\texttt{\normalshape robert.atkey@strath.ac.uk}}}
  \thanks[wenemail]%
  {Email: \href{me@wen.works}%
    {\texttt{\normalshape me@wen.works}}}
  \begin{abstract}
    \emph{Multiplicative-Additive System Virtual} (MAV) is a logic
    that extends Multiplicative-Additive Linear Logic with a self-dual
    non-commutative operator expressing the concept of ``before'' or
    ``sequencing''. MAV is also an extenson of the the logic Basic
    System Virtual (BV) with additives. Formulas in BV have an
    appealing reading as processes with parallel and sequential
    composition. MAV adds internal and external choice operators. BV
    and MAV are also closely related to Concurrent Kleene Algebras.
    
    Proof systems for MAV and BV are Deep Inference systems, which
    allow inference rules to be applied anywhere inside a
    structure. As with any proof system, a key question is whether
    proofs in MAV can be reduced to a normal form, removing detours
    and the introduction of structures not present in the original
    goal. In Sequent Calcluli systems, this property is referred to as
    Cut Elimination. Deep Inference systems have an analogous Cut rule
    and other rules that are not present in normalised proofs. Cut
    Elimination for Deep Inference systems has the same metatheoretic
    benefits as for Sequent Calculi systems, including consistency and
    decidability.
    
    Proofs of Cut Elimination for BV, MAV, and other Deep Inference
    systems present in the literature have relied on intrincate
    syntactic reasoning and complex termination measures.

    We present a concise semantic proof that all MAV proofs can be
    reduced to a normal form avoiding the Cut rule and other ``non
    analytic'' rules. We also develop soundness and completeness
    proofs of MAV (and BV) with respect to a class of models. We have
    mechanised all our proofs in the Agda proof assistant, which
    provides both assurance of their correctness as well as yielding
    an executable normalisation procedure. Our technique extends to
    include exponentials and the additive units.
  \end{abstract}
  \begin{keyword}
    Linear Logic, Deep Inference, Algebraic Semantics, Metatheory
  \end{keyword}
\end{frontmatter}

\section{Introduction}\label{sec:introduction}

We present an algebraic semantics and semantic proof of generalised cut-elimination for the multiplicative-additive system MAV~\cite{Horne15:mav}, which extends the basic system BV\footnote{
      BV stands for \underline{B}asic System \underline{V}irtual, owing to an early interpretation of CCS interaction as the pairwise production and annihilation of virtual particles in physics~\cite[\Fn2]{Horne15:mav}.
}~\cite{Guglielmi99:bv,Guglielmi07:sis} with the additives of multiplicative-additive linear logic~\cite[MALL]{Girard87:ll}. The proof technique also extends to the additive units and the exponentials.
Our proofs are constructive and mechanised in Agda~\cite{Agda264}.

\subsection{BV, MAV, and Deep Inference}

MAV and BV are \emph{Deep Inference} systems. Deep Inference~\cite{Guglielmi14:di} is generalisation of Gentzen's methodology for designing proof systems, which arose from Guglielmi's attempts to relate process algebra~\cite[CCS]{Milner80:CCS,Milner89:CC} to Classical Linear Logic~\cite[CLL]{Girard87:ll}.
The problem with such a relation is that, while the multiplicative connectives of Linear Logic capture parallel composition, no connective of Linear Logic captures \emph{sequential composition}.
Eventually, Guglielmi's attempts yielded BV, which extends Multiplicative Linear Logic~\cite[MLL]{Girard87:ll} with a self-dual non-commutative connective for sequential composition.
Such a connective was already present in another extension of Linear Logic, Pomset Logic~\cite{Retore97:pomset}, where it arose from the study of coherence space semantics of Linear Logic~\cite[\C8]{GirardTL89:proofs}.
Recently, Nguyễn and Stra{\ss}burger~\cite{NguyenS22:bvisnotpl,NguyenS23:complexity} showed that, while BV is similar to Pomset logic, the two are not the same, as the theorems of BV form a proper subset of the theorems of Pomset logic.
Neither BV nor Pomset logic has a sequent calculus. Tiu~\cite{Tiu06:sisii} showed that no sequent calculus can capture BV, and it is assumed that this result extends to Pomset logic.

\emph{Cut-elimination}, or \emph{admissibility of Cut}, is the fundamental property of Gentzen Sequent Calculi systems, which states that proofs using the Cut rule to introduce ``detours'' can be normalised to ones without. Crucial properties such as consistency and decidability follow from Cut-elimination. The Deep Inference analogue of Cut-elimination is the admissibility of the whole \emph{up} fragment of the calculus, which includes the Deep Inference form of Cut (which we describe in \Cref{sec:mav-syntax} below) as well as duals of most of the other rules of the calculus. Admissibility of the \emph{up} fragment has the same metatheoretic benefits for Deep Inference systems as it does for Sequent Calculus ones.

Guglielmi \cite[\S4.1]{Guglielmi14:di} proves admissibility of Cut via the Splitting Theorem, which shows that proofs of conjoined structures can be split into separate subproofs. This is proved by a detailed syntactic analysis of proofs. Horne~\cite{Horne15:mav} gave a syntactic proof of the admissibility of the up fragment for MAV, that extends Guglielmi's technique with further reasoning about the additives. The proof is quite lengthy and involves intricate syntactic reasoning and the subtle and complex termination measures.

We present an alternative proof of Cut-elimination via a semantic model. This proof avoids some of the intricacy of Horne's proof. Our proof is more robust in the presence of extensions, due to our use of standard constructions such as Day monoids, order ideals, and the Chu construction. We demonstrate this by scaling down to plain BV and up to MAV with additive units, and also including exponentials (Guglielmi and Stra{\ss}burger's System NEL \cite{Burger_2011,GuglielmiS11}).

The technique of demonstrating Cut Elimination by construction of a semantic model for MALL is due to Okada~\cite{Okada99:psc}, who shows that the Phase Space model of MALL, described by Girard~\cite[\S4.1]{Girard87:ll} and Troelstra~\cite[\C8]{Troelstra92:lll}, can be constructed from cut-free proofs.
The completeness of this model directly yields the existence of a cut-free proof for every proof constructible in the MALL sequent calculus.
The same technique was used by Abrusci~\cite{Abrusci91:psc} for non-commutative linear logic, by De, Jafarrahmani, and Saurin~\cite{De22:psc} for MALL with fixed points, and was adapted to Bunched Implications by  Frumin~\cite{Frumin22:psc}.

The Phase Semantics-based proof of cut-elimination does not easily extend to include the kind of self-dual connective present in BV and its extensions.
The Phase Space model derives duality by means of double negation with respect to the monoidal structure, which means that any connective has a derived dual. Attempts to extend the model with the non-commutative connective result in two distinct but dual non-commutative connectives, reminiscent of the non-commutative tensor and par introduced by Slavnov~\cite{Slavnov19:scmll}.

\subsection{Contribution and Content of this Paper}

To our knowledge, all prior work on the metatheory of Deep Inference systems like BV and MAV has been carried out using syntactic techniques such as rewriting with termination measures, or translations into other logics with known Cut-Elimination properties.

Our main contribution is the use of semantic techniques to derive the admissibility of identity expansion, cut, and the other \emph{up} rules of MAV. To this end, we have developed a number of results concerning the semantics of BV and MAV:
\begin{enumerate}
      \item In \Cref{sec:mav-algebras}, we propose \emph{MAV-algebras} as the algebraic counterpart of MAV. In short, an MAV-algebra is a $*$-autonomous partial order with meets, with another partially ordered monoid structure that is \emph{duoidal} with respect to the $*$-autonomous structure.
      \item Normal proofs (our name for the \emph{up} fragment) as we define them in the next section do not {\it prima facie} support all the structure of an MAV-algebra, so we define the weaker notion of \emph{MAV-frame} (\Cref{defn:mav-frame}). These play the same role as Kripke structures or frames do for modal and substructural logics. The first main technical contribution of the paper is in showing that every MAV-frame can be completed to an MAV-algebra.
      \item We then show that MAV-frames are strongly complete (in the terminology of Okada) for MAV by proving that the MAV-frame constructed from normal proofs can be used to deduce that all MAV-provable structures have normal proofs in \Cref{thm:cut-elim}. Completeness in the usual sense also follows in \Cref{thm:completeness}.
\end{enumerate}

We describe the MAV Deep Inference system in \Cref{sec:mav-syntax} and motivate the idea for readers not familiar with such systems.

As far as we are aware, the semantics of BV and its extensions have not been considered before in the literature. The crucial part of the proof is that the standard Chu construction extends to the self-dual non-commutative connective of BV and MAV (\Cref{prop:chu-self-dual} and \Cref{prop:chu-monoid-duoidal}). Our development of the semantics of MAV and BV also opens the possibility of using these logics as sound and complete systems for reasoning about the MAV-frame structures we define.

All of our proofs have been mechanised and checked in the Agda proof assistant \cite{Agda264}. We briefly discuss the mechanisation in \Cref{sec:mechanisation} and provide a hyperlinked guide to the proof relating it to our mathematical development in \Cref{sec:table-of-statments}. Aside from the benefits of checking the proof, the Agda proof is executable and yields a program for normalising proofs.

We present our proof for MAV specifically, but note that essentially the same proof applies to the subsystem BV as well. We have also extended the proof technique to include the additive units, and with exponentials (System NEL, analysed by Guglielmi and Stra{\ss}burger \cite{Burger_2011,GuglielmiS11}). We discuss further extensions in \Cref{sec:future-work}.

\section{The system MAV}\label{sec:mav-syntax}

In Deep Inference terminology, proofs operate on \emph{structures}, which simultaneously play the role of formulas and sequents in traditional Sequent Calculus systems. There are a number of different notations in the literature for the structures of BV and related systems. For familiarity's sake, we opt for a notation similar to the formulas of normal Linear Logic, albeit extended with the self-dual non-commutative connective $\vSeq$.

The structures of MAV are formed from positive and negative atoms ($\vPos\va$ and $\vNeg\va$) drawn from some set of atomic propositions, units ($\vUnit$), the non-commutative connective \emph{seq} ($\vSeq$), the multiplicative connectives \emph{tensor} and \emph{par} ($\vTens$ and $\vParr$) and additive connectives \emph{with} and \emph{plus} ($\vWith$ and $\vPlus$).
\begin{displaymath}
  \vP,\vQ,\vR,\vS
  \Coloneq \vPos\va
  \mid     \vNeg\va
  \mid     \vUnit
  \mid     \vP\vSeq \vQ
  \mid     \vP\vTens\vQ
  \mid     \vP\vParr\vQ
  \mid     \vP\vWith\vQ
  \mid     \vP\vPlus\vQ
\end{displaymath}
Duality ($\vDual\vP$) is an involutive function on structures that obeys the De Morgan laws for the multiplicative and additive connectives and preserves the self-dual connective $\vSeq$.
\begin{displaymath}
  \begin{array}{
      lcl @{\hspace{1cm}}
      lcl @{\hspace{1cm}}
      lcl @{\hspace{1cm}}
      lcl}
    \multicolumn{3}{l}{}
     & \vDual{\vUnit}
     & =
     & \vUnit
     & \vDual{\vP\vTens\vQ}
     & =
     & \vDual\vP \vParr \vDual\vQ
     & \vDual{\vP\vParr\vQ}
     & =
     & \vDual\vP \vTens \vDual\vQ
    \\
    \vDual{\vNeg\va}
     & =
     & \vPos\va
     & \vDual{\vP\vSeq \vQ}
     & =
     & \vDual\vP \vSeq  \vDual\vQ
     & \vDual{\vP\vWith\vQ}
     & =
     & \vDual\vP \vPlus \vDual\vQ
     & \vDual{\vP\vPlus\vQ}
     & =
     & \vDual\vP \vWith \vDual\vQ
  \end{array}
\end{displaymath}
Structures are considered equivalent modulo the equality $\vEquiv$, which is the smallest congruence defined by the associativity, commutativity, and identity laws that ensure that $(\vSeq,\vUnit)$ forms a monoid, and $(\vTens,\vUnit)$ and $(\vParr,\vUnit)$ form commutative monoids.
\begin{displaymath}
  \begin{array}{
      l@{\;\vEquiv\;}ll @{\hspace{1cm}}
      l@{\;\vEquiv\;}ll @{\hspace{1cm}}
      l@{\;\vEquiv\;}ll}
    \vP\vSeq\vUnit
     & \vP
     & \RuleLabel*[seq-runit]{\vSeq-Unit\textsuperscript{R}}
     &
    \vUnit\vSeq\vP
     & \vP
     & \RuleLabel*[seq-lunit]{\vSeq-Unit\textsuperscript{L}}
     &
    \vP\vSeq(\vQ\vSeq\vR)
     & (\vP\vSeq\vQ)\vSeq\vR
     & \RuleLabel*[seq-assoc]{\vSeq-Assoc}
    \\
    \vP\vTens\vUnit
     & \vP
     & \RuleLabel*[tens-unit]{\vTens-Unit}
     &
    \vP\vTens\vQ
     & \vQ\vTens\vP
     & \RuleLabel*[tens-comm]{\vTens-Comm}
     &
    \vP\vTens(\vQ\vTens\vR)
     & (\vP\vTens\vQ)\vTens\vR
     & \RuleLabel*[tens-assoc]{\vTens-Assoc}
    \\
    \vP\vParr\vUnit
     & \vP
     & \RuleLabel*[parr-unit]{\vParr-Unit}
     &
    \vP\vParr\vQ
     & \vQ\vParr\vP
     & \RuleLabel*[parr-comm]{\vParr-Comm}
     &
    \vP\vParr(\vQ\vParr\vR)
     & (\vP\vParr\vQ)\vParr\vR
     & \RuleLabel*[parr-assoc]{\vParr-Assoc}
  \end{array}
\end{displaymath}
Structure contexts are one-hole contexts over structures. Plugging ($\vC\vPlug\vP$) replaces the hole in $\vC$ with $\vP$.
\begin{displaymath}
  \vC,\vD
  \Coloneq \vEmpty
  \mid     \vC\vSeq \vQ
  \mid     \vP\vSeq \vD
  \mid     \vC\vTens\vQ
  \mid     \vP\vTens\vD
  \mid     \vC\vParr\vQ
  \mid     \vP\vParr\vD
  \mid     \vC\vWith\vQ
  \mid     \vP\vWith\vD
  \mid     \vC\vPlus\vQ
  \mid     \vP\vPlus\vD
\end{displaymath}
The inference rules of MAV are presented as a \emph{rewriting system} on structures. As this may be surprising to readers unfamiliar with deep inference, let us examine how this presentation relates to the usual presentation of linear logic.
Rule~(\ref{rule:AxiomLL}), shown below, is the axiom rule in the usual one-sided presentation of linear logic.
In the one-sided presentation, the turnstile is vestigial syntax, and can be removed.
In BV, the $\vParr$ connective plays the same role as the comma does in the antecedent of a linear logic sequent, and the $\vUnit$ plays the same role as the empty sequent, which would give us rule~(\ref{rule:AxiomBV-bad}) for BV.
However, BV's inference rules can work arbitrarily deep in the structure. (Hence, \emph{deep} inference.)
Hence, the axiom for BV is rule~(\ref{rule:AxiomBV}), where $\vC$ is a one-hole structure context.
\begin{center}
  $\inlineequation[rule:AxiomLL]{%
      \vlderivation{\vlin{}{}{\vdash\vP,\vDual\vP}{\vlhy{}}}}$
  \qquad
  $\inlineequation[rule:AxiomBV-bad]{%
      \vlderivation{\vlin{}{}{\vP\vParr\vDual\vP}{\vlhy{\vUnit}}}}$
  \qquad
  $\inlineequation[rule:AxiomBV]{%
      \vlderivation{\vlin{}{}{%
          \vC\vPlug{\vP\vParr\vDual\vP}}{%
          \vlhy{\vC\vPlug{\vUnit}}}}}$
\end{center}
Rule (\ref{rule:CutLL}) is the cut rule in the usual one-sided presentation of linear logic.
In rule (\ref{rule:CutLL}), as in any branching inference rule, the branching enforces the \emph{disjointness} of the premise derivations.
In BV, disjointness is internalised by the $\vTens$ connective.
Hence, it plays the same role as branching does in sequent calculus.
This would give us rule (\ref{rule:CutBVBad}) for BV.
However, as BV's inference rules can work arbitrarily deep in the structure, and the system contains the \cref{rule:Switch} rule, the surrounding context of $\vParr$s is unnecessary---and too restrictive. Hence, the cut for BV is rule (\ref{rule:CutBV}).
\begin{center}
  $\inlineequation[rule:CutLL]{%
      \vlderivation{\vliin{}{}{%
          \vdash\vGG,\vGG',\vGD,\vGD'}{%
          \vlhy{\vdash\vGG,\vP,\vGG'}}{%
          \vlhy{\vdash\vGD,\vDual\vP,\vGD'}}}}$
  \qquad
  $\inlineequation[rule:CutBVBad]{%
      \vlderivation{\vlin{}{}{%
          \vGG\vParr\vGG'\vParr\vGD\vParr\vGD'}{%
          \vlhy{%
            (\vGG\vParr\vP\vParr\vGG')
            \vTens
            (\vGD\vParr\vDual\vP\vParr\vGD')}}}}$
  \qquad
  $\inlineequation[rule:CutBV]{%
      \vlderivation{\vlin{}{}{%
          \vC\vPlug{\vUnit}}{%
          \vlhy{\vC\vPlug{\vP\vTens\vDual\vP}}}}}$
\end{center}
Beautifully, internalising branching makes the symmetry between the axiom and cut plain to see. In BV, to acknowledge this symmetry and the connection with Milner's CCS, the axiom and cut rules are referred to as \emph{interaction} and \emph{co-interaction}.

Proof trees are a cumbersome presentation for BV's derivations---they are convenient for branching sequent proofs, but BV derivations are sequences of structures.
\emph{Rewriting systems}, on the other hand, are a well-known and convenient notation for such derivations.
Hence, in MAV, inference rules are presented as rewrite rules.

\emph{Inference}, written $\vInferFrom$, is the smallest relation defined by the following axioms:
\newlength{\vInferFromDefnWidthOfLBL}%
\settowidth{\vInferFromDefnWidthOfLBL}{$\RuleName{AtomAxiom}$}%
\newlength{\vInferFromDefnWidthOfLHS}%
\settowidth{\vInferFromDefnWidthOfLHS}{$(\vP\vSeq\vQ)\vParr(\vR\vSeq\vS)$}%
\newlength{\vInferFromDefnWidthOfRHS}%
\settowidth{\vInferFromDefnWidthOfRHS}{$(\vP\vParr\vR)\vWith(\vQ\vParr\vR)$}%
\begin{displaymath}
  \begin{array}{lcl@{\hspace{0.5cm}}|@{\hspace{0.5cm}}l}
    \vP\vParr\vDual\vP
     & \vInferFrom
     & \vUnit
     & \makebox[\vInferFromDefnWidthOfLBL][l]{$\RuleLabel{Interact}$}
    \\
    (\vP\vTens\vQ)\vParr\vR
     & \vInferFrom
     & \vP\vTens(\vQ\vParr\vR)
     & \RuleLabel{Switch}
    \\
    \vUnit\vWith\vUnit
     & \vInferFrom
     & \vUnit
     & \RuleLabel{Tidy}
    \\
    (\vP\vSeq\vQ)\vParr(\vR\vSeq\vS)
     & \vInferFrom
     & (\vP\vParr\vR)\vSeq(\vQ\vParr\vS)
     & \RuleLabel{Sequence}
    \\
    \vP\vPlus\vQ
     & \vInferFrom
     & \vP
     & \RuleLabel{Left}
    \\
    \vP\vPlus\vQ
     & \vInferFrom
     & \vQ
     & \RuleLabel{Right}
    \\
    (\vP\vWith\vQ)\vParr\vR
     & \vInferFrom
     & (\vP\vParr\vR)\vWith(\vQ\vParr\vR)
     & \RuleLabel{External}
    \\
    (\vP\vSeq\vQ)\vWith(\vR\vSeq\vS)
     & \vInferFrom
     & (\vP\vWith\vR)\vSeq(\vQ\vWith\vS)
     & \RuleLabel{Medial}
    \\
     &
     &
     &
    \\
    \vUnit
     & \vInferFrom
     & \vP\vTens\vDual\vP
     & \RuleLabel{CoInteract}
    \\
    \vUnit
     & \vInferFrom
     & \vUnit \vPlus \vUnit
     & \RuleLabel{CoTidy}
    \\
    (\vP\vTens\vR)\vSeq(\vQ\vTens\vS)
     & \vInferFrom
     & (\vP\vSeq\vQ)\vTens(\vR\vSeq\vS)
     & \RuleLabel{CoSequence}
    \\
    \vP
     & \vInferFrom
     & \vP\vWith\vQ
     & \RuleLabel{CoLeft}
    \\
    \vQ
     & \vInferFrom
     & \vP\vWith\vQ
     & \RuleLabel{CoRight}
    \\
    (\vP \vTens \vR) \vPlus (\vQ \vTens \vR)
     & \vInferFrom
     & (\vP \vPlus \vQ) \vTens \vR
     & \RuleLabel{CoExternal}
    \\
    (\vP \vPlus \vR) \vSeq (\vQ \vPlus \vS)
     & \vInferFrom
     & (\vP \vSeq \vQ) \vPlus (\vR \vSeq \vS)
     & \RuleLabel{CoMedial}
    \\
     &
     &
     &
    \\
    \vC\vPlug\vP\vInferFrom\vC\vPlug\vQ
     & \text{if}
     & \vP\vInferFrom\vQ
     & \RuleLabel{Mono}
  \end{array}
\end{displaymath}
If $\vP\vInferFrom\vQ$, we say that $\vP$ can be inferred from $\vQ$, \ie the arrow points from conclusion to premise.

\textbf{N.B.}\ $\vP\vInferFrom\vQ$ is an inference rule, \emph{not a sequent}. In sequent calculus notation, it is $\frac{\vQ}{\vP}$, \emph{not} $\vP\vdash\vQ$.

\emph{Derivation}, written $\vInferFrom*$ is the reflexive, transitive closure of inference.
\emph{Invertible derivation}, written $\vInferFromTo*$, is the symmetric core of derivation, \ie $\vP\vInferFromTo*\vQ=\vP\vInferFrom*\vQ\cap\vQ\vInferFrom*\vP$.
\emph{Proofs} are derivations that terminate in the unit, \eg a derivation $\vP\vInferFrom*\vUnit$ is a proof of $\vP$.

The inference rules come in dual pairs. For every rule $\vP\vInferFrom\vQ$, there is a dual rule $\vDual\vQ\vInferFrom\vDual\vP$.
The exception is \cref{rule:Switch}, which is self-dual, up to commutativity.
The \cref{rule:CoInteract}, \cref{rule:CoLeft}, and \cref{rule:CoRight} rules introduce new structures going left-to-right.
Normal proofs, which we define below, avoid these \emph{synthetic} rules.

\begin{remark}
  In BV, the structural connectives are usually presented as lists, distinguished only by their brackets: $\vP\vTens\vQ$ is written as $\vls(\vP;\vQ)$; $\vP\vParr\vQ$ is written as $\vls[\vP;\vQ]$; and $\vP\vSeq\vQ$ is written as $\vls<\vP;\vQ>$.
  Inferences, derivations, and proofs are presented vertically, as (\ref{notation:vlin}), (\ref{notation:vlde}), and (\ref{notation:vlpr}), respectively.
  \begin{center}
    $\inlineequation[notation:vlin]{%
        \vlderivation{\vlin{}{}{\vP}{\vlhy{\vQ}}}}$
    \qquad
    $\inlineequation[notation:vlde]{%
        \vlderivation{\vlde{}{}{\vP}{\vlhy{\vQ}}}}$
    \qquad
    $\inlineequation[notation:vlpr]{%
        \vlderivation{\vlpr{}{}{\vP}}}$
  \end{center}

  The relation between the deductive system for BV and rewrite systems is well-known, \eg by Kahramanogullari~\cite{Kahramanogullari08:maude}, who implements proof search for several deep inference systems in Maude~\cite{ClavelDELMMQ02:maude}.
  Inferences rules are usually named with the combination of a letter and an up or down arrow, \eg \cref{rule:Interact} and \cref{rule:CoInteract} are $\iD$ and $\iU$, respectively. The exception are self-dual rules, which are named with a single letter, \eg \cref{rule:Switch} is usually named $\sw$.
\end{remark}

\begin{definition}\label[defn]{defn:normal}
  A derivation is \emph{normal} when it does not use \cref{rule:CoInteract} nor any of the other (\RuleName{Co-}) rules, and its uses of \cref{rule:Interact} are restricted to \cref{rule:AtomInteract}, as defined by the following axiom:
  \begin{displaymath}
    \begin{array}{lcl@{\hspace{0.5cm}}|@{\hspace{0.5cm}}l}
      \makebox[\vInferFromDefnWidthOfLHS][l]{$\vNeg\va\vParr\vPos\va$}
       & \vInferFrom
       & \makebox[\vInferFromDefnWidthOfRHS][l]{$\vUnit$}
       & \RuleLabel{AtomInteract}
    \end{array}
  \end{displaymath}
\end{definition}
Normal derivations avoid the use of rules that introduce new structures into proofs, and so can be termed \emph{analytic} in contrast to the need for the synthetic rules to synthesise new structures.

Our main result, \Cref{thm:cut-elim}, is that every structure that is provable in full MAV also has a normal proof. Therefore, the system with only analytic rules is complete for MAV provability. Horne \cite{Horne15:mav} proves this result via a syntactic proof involving rewriting and termination measures. In the following sections, we construct a semantic proof that normal proofs are complete for MAV. In \Cref{sec:future-work}, we report on extensions of the same technique to variants of MAV.

\section{Semantic Models for MAV}\label{sec:mav-semantics}

To prove the normalisation property for all MAV proofs, we use a
semantic technique that is akin to Okada's phase space method and to
Normalisation by Evaluation (NbE) \cite{BergerES98}. We construct a semantics of the
whole proof system from the system of normal proofs. This semantics is
constructed in such a way that after interpreting a proof, the
(existence of) a normal form can be extracted (or \emph{read back} or
\emph{reified} in NbE terminology) from the semantic proof.

To our knowledge, the semantics of MAV have not been previously
studied in the literature. The rules are an extension of MALL, so an
\emph{MAV-algebra} will partly be a $*$-autonomous partial order
(\Cref{defn:star-autonomous}) with meets and joins. The additional
structure for $\vSeq$ satisfies the conditions of a duoidal category
\cite[Definition 6.1]{Aguiar_2010} (\Cref{defn:duoidal}). We show that
MAV is sound for MAV-algebras in \Cref{thm:soundness}.

To build MAV-algebras from normal proofs, we define the weaker notion
of an \emph{MAV-frame} (\Cref{defn:mav-frame}). We show
that a combination of certain closed lower sets (\Cref{sec:closed-lower-sets}) and the Chu construction (\Cref{sec:chu}) construct an MAV-algebra from any MAV-frame. Much of
these constructions are well-known, but we have new results on lifting
the Day construction of monoids on closed lower sets and the
preservation of duoidal relationships that are required for MAV.

\subsection{Pomonoidal, $*$-autonomous, and Duoidal Structure on Partial Orders}\label{sec:mav-semantics-preliminaries}

The algebraic semantics of MAV is a collection of interacting monoids
on a partial order. We collect here the basic definitions and kinds of
interaction we will need.

\begin{definition}\label[defn]{defn:pomonoid}
  A \emph{partial order monoid (pomonoid)} $(\bullet, i)$ on a poset
  $(A, \leq)$ comprises a binary operator $\bullet : A \times A \to A$
  that is monotone in both arguments and an element $i \in A$ such
  that the usual monoid laws hold. A \emph{commutative pomonoid} is a
  pomonoid where additionally $x \bullet y = y \bullet x$.
\end{definition}

\begin{definition}\label[defn]{defn:residual}
  A commutative pomonoid $(\bullet, i)$ on a poset $(A, \leq)$ is
  \emph{residuated} if there is a function
  $\rightblackspoon : A \times A \to A$ such that $x \bullet y \leq z$
  iff $x \leq y \rightblackspoon z$.
\end{definition}

Linear logic adds a duality, or negation, to a commutative
pomonoid structure. Semantically, duality with commutativity is
captured in the definition of $*$-autonomous category, due to Barr~\cite{Barr_1979}. For our purposes, we need the partial order
analogue, also called a CL algebra by Troelstra~\cite{Troelstra92:lll}.

\begin{definition}\label[defn]{defn:star-autonomous}
  A \emph{$*$-autonomous partial order} is a structure
  $(A, \leq, \otimes, I, \lnot)$ where $(\otimes, I)$ is a pomonoid on
  $(A, \leq)$ and $\lnot : A^\op \to A$ is an anti-monotone and
  involutive operator on $A$, together satisfying
  $x \otimes y \leq \lnot z$ iff $x \leq \lnot (y \otimes z)$.  A
  $*$-autonomous partial order satisfies \emph{mix} if $\lnot I = I$.
\end{definition}

\begin{remark}
  The structure of a $*$-autonomous partial order has a number of
  immediate consequences, but we leave description of these until
  after the definition of MAV-algebra in \Cref{defn:mav-algebra}.
\end{remark}

BV and MAV extend linear logic by adding a non-commutative pomonoid
structure that interacts with the existing pomonoid via a kind of
interchange law (the \cref{rule:Sequence} rule in the proof
system). We follow Aguiar and Mahajan \cite[Definition 6.1]{Aguiar_2010} generalising
these to the case of monoids with differing units. Their terminology
is of a category having duoidal structure. We find it useful to
describe one pomonoid as being \emph{duoidal over} another to
emphasise the non-symmetric nature of the relationship, and by analogy
with one monoid distributing over another.

\begin{definition}\label[defn]{defn:duoidal}
  A pomonoid $(\bullet, i)$ is \emph{duoidal over} another pomonoid
  $(\lhd, j)$ on a partial order $(A, \leq)$ if the following
  inequalities hold:
  \begin{enumerate}
    \item $(w \lhd x) \bullet (y \lhd z) \leq (w \bullet y) \lhd (x \bullet z)$
    \item $j \bullet j \leq j$
    \item $i \leq i \lhd i$
    \item $i \leq j$
  \end{enumerate}
\end{definition}

\begin{remark}
  In the case when the two pomonoids share a common unit the last
  three conditions for a duoidal relationship are automatically
  satisfied. We can also ignore the existence of the units and just
  describe two binary operators as being duoidal. If $\bullet$ is a
  join, or $\lhd$ is a meet, then all the conditions for a duoidal
  relationship are automatically met.
\end{remark}

\subsection{MAV-algebras}
\label{sec:mav-algebras}

We define MAV-algebras as the algebraic semantics of MAV. The
definition is a direct translation of the rules of MAV into
order-theoretic language, using the definitions we have seen so far.

\begin{definition}\label[defn]{defn:mav-algebra}
  An \emph{MAV-algebra} is a structure
  $(A, \leq, \otimes, \lhd, I, \lnot)$ with the following properties:
  \begin{enumerate}
    \item $(A, \leq, \otimes, I, \lnot)$ is $*$-autonomous and satisfies \emph{mix}.
    \item $(A, \leq, \lhd, I)$ is a pomonoid.
    \item $\lhd$ is self dual: $\lnot (x \lhd y) = (\lnot x)\lhd (\lnot y)$.
    \item $(\otimes, I)$ is duoidal over $(\lhd, I)$.
    \item $(A, \leq)$ has binary meets, which we write as $x \with y$.
  \end{enumerate}
\end{definition}

\begin{proposition}\label[prop]{prop:mav-algebra-consequences}
  Let $(A, \leq, \otimes, \lhd, I, \lnot)$ be a MAV-algebra.
  \begin{enumerate}
    \item There is another commutative pomonoid structure $(\parr, I)$ on
          $(A, \leq)$, defined as
          $x \parr y = \lnot(\lnot x \otimes \lnot y)$.
    \item $(\otimes, I)$ and $(\parr, I)$ are linearly distributive
          \cite{Cockett_1999}:
          $x \otimes (y \parr z) \leq (x \otimes y) \parr z$.
    \item $(A, \leq)$ has binary joins, given by
          $x \oplus y = \lnot (\lnot x \with \lnot y)$.
    \item $\oplus$ distributes over $\otimes$:
          $x \otimes (y \oplus z) = (x \otimes y) \oplus (x \otimes z)$.
    \item $\with$ distributes over $\parr$:
          $(x \parr z) \with (y \parr z) = (x \with y) \parr z$.
    \item $\lhd$ is duoidal over $\parr$:
          $(w \parr x)\lhd (y \parr z) \leq (w\lhd y) \parr (x\lhd z)$.
    \item $\lhd$ is duoidal over $\with$:
          $(w \with x)\lhd (y \with z) \leq (w \lhd y) \with (x\lhd z)$.
    \item $\oplus$ is duoidal over $\lhd$:
          $(w \lhd x) \oplus (y \lhd z) \leq (x \oplus y) \lhd (x \oplus z)$.
  \end{enumerate}
\end{proposition}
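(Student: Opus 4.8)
The plan is to treat \Cref{prop:mav-algebra-consequences} as a sequence of routine order-theoretic consequences of the MAV-algebra axioms, grouping the eight items by the technique each requires. Throughout I would use that $\lnot$ is an anti-monotone involution, hence an order anti-isomorphism of $(A,\leq)$ exchanging meets and joins; that $(\otimes, I)$ is a commutative pomonoid satisfying the $*$-autonomous biconditional $x \otimes y \leq \lnot z \iff x \leq \lnot(y \otimes z)$; and that $\lnot I = I$ by mix.

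First the purely De Morgan items, (1) and (3). For (1), commutativity and associativity of $\parr$ transport from $\otimes$ through $\lnot$, monotonicity follows by composing the anti-monotone $\lnot$ with the monotone $\otimes$, and $x \parr I = \lnot(\lnot x \otimes \lnot I) = \lnot(\lnot x \otimes I) = x$ uses mix. For (3), since $\lnot$ is an order anti-isomorphism it sends the meet $\lnot x \with \lnot y$ to a join: concretely $x,y \leq z \iff \lnot z \leq \lnot x \with \lnot y \iff \lnot(\lnot x \with \lnot y) \leq z$, so $\oplus$ is the binary join.

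The two substantive items are (2) and (4), both resting on the adjunction. For (4), I would first record that $x \otimes (-)$ is a left adjoint, with right adjoint $b \mapsto \lnot(x \otimes \lnot b)$; indeed $x \otimes a \leq b \iff a \leq \lnot(x \otimes \lnot b)$ follows from the biconditional together with commutativity. A left adjoint preserves whatever joins exist, giving $x \otimes (y \oplus z) = (x \otimes y) \oplus (x \otimes z)$. For (2), I would reduce $x \otimes (y \parr z) \leq (x \otimes y) \parr z$ through the biconditional to the single inequality $x \otimes (y \parr z) \otimes \lnot(x \otimes y) \leq z$, then prove two auxiliary facts by short biconditional chases — $x \otimes \lnot(x \otimes y) \leq \lnot y$ (which collapses to $\lnot(x \otimes y) \leq \lnot(y \otimes x)$ under commutativity) and the evaluation inequality $\lnot y \otimes (y \parr z) \leq z$ — and compose them after reassociating, using monotonicity of $\otimes$ in the middle step.

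Items (5) and (6) then come for free by dualising (4) and the duoidal axiom~(1) of \Cref{defn:mav-algebra}. For (5), apply $\lnot$ to (4) and push it through De Morgan for $\otimes/\parr$ and $\with/\oplus$. For (6), apply the anti-monotone $\lnot$ to $(a \lhd b) \otimes (c \lhd d) \leq (a \otimes c) \lhd (b \otimes d)$, use self-duality of $\lhd$ on both sides and De Morgan to turn every $\otimes$ into $\parr$, and relabel via involution. Finally (7) and (8) are immediate from monotonicity of $\lhd$ and the universal properties of $\with$ and $\oplus$: for (7) the bound factors through the meet because both $\lhd$-projections dominate $(w \with x) \lhd (y \with z)$, and dually for (8) each of $w \lhd x$ and $y \lhd z$ lies below the right-hand side, so their join does too; these are exactly the instances of the remark following \Cref{defn:duoidal} in which the outer operation is a meet or the inner one a join.

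The hard part will be item (2): unlike the others it does not reduce to a single universal property or a one-line De Morgan transport, so the bookkeeping — selecting the right instance of the biconditional, isolating the two auxiliary inequalities, and reassociating under commutativity — is where care is needed. Everything else is essentially transport of structure along $\lnot$ together with monotonicity.
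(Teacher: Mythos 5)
Your proof is correct, but note that the paper states \Cref{prop:mav-algebra-consequences} without any written proof --- the argument exists only in the Agda mechanisation --- so there is no textual proof to compare against. Your decomposition is the natural one and each step checks out: De Morgan transport along the involution for items (1), (3), (5) and (6); the adjunction $x \otimes (-) \dashv \lnot(x \otimes \lnot(-))$ (valid by commutativity plus the biconditional) for item (4); the reduction of item (2) to $x \otimes (y \parr z) \otimes \lnot(x \otimes y) \leq z$ followed by the two auxiliary inequalities $x \otimes \lnot(x \otimes y) \leq \lnot y$ and $\lnot y \otimes (y \parr z) \leq z$, the second being an instance of the first; and the universal properties of meet and join for items (7) and (8), exactly as licensed by the remark following \Cref{defn:duoidal}. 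Two remarks. First, you take commutativity of $\otimes$ as given; \Cref{defn:star-autonomous} literally asks only for a pomonoid, but commutativity is clearly intended (the text invokes Barr's symmetric setting and Troelstra's CL algebras) and is genuinely necessary --- without it item (1) is false (the biconditional alone yields only a cyclic duality, under which $\parr$ defined by De Morgan need not be commutative), and your arguments for (2) and (4) each use a commutation step. Second, item (8) as printed contains a typo: the right-hand side $(x \oplus y) \lhd (x \oplus z)$ should read $(w \oplus y) \lhd (x \oplus z)$; your argument, which bounds each of $w \lhd x$ and $y \lhd z$ by that corrected right-hand side and then takes the join, silently proves the corrected statement, which is the right thing to do.
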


\begin{definition}\label[defn]{defn:mav-interpretation}
  Let $\mathit{At}$ be a set of atomic propositions. Given an
  MAV-algebra $(A, \leq, \otimes, \lhd, I, \lnot)$ and valuation
  $V : \mathit{At} \to A$, define the interpretation of MAV Formulas
  as follows:
  $\sem{\vUnit} = I$,
  $\sem{\va} = V(\va)$,
  $\sem{\vDual \va} = \lnot V(\va)$,
  $\sem{\vP\vTens\vQ} = \sem{\vP} \otimes \sem{\vQ}$,
  $\sem{\vP\vParr\vQ} = \sem{\vP} \parr \sem{\vQ}$,
  $\sem{\vP\vSeq\vQ} = \sem{\vP}\lhd\sem{\vQ}$,
  $\sem{\vP\vWith\vQ} = \sem{\vP} \with \sem{\vQ}$, and
  $\sem{\vP\vPlus\vQ} = \sem{\vP} \oplus \sem{\vQ}$.
\end{definition}

\begin{lemma}\label[lem]{lem:dual-ok}
  For all $\vP$, $\sem{\vDual{\vP}} = \lnot \sem{\vP}$.
\end{lemma}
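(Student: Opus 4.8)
The plan is to proceed by structural induction on the formula $\vP$, since both the duality operation $\vDual{(-)}$ and the interpretation $\sem{-}$ are defined by recursion on the structure of formulas. In each case the recipe is the same: unfold the definition of duality, unfold the interpretation, apply the induction hypotheses, and then invoke exactly one algebraic identity of the MAV-algebra to rewrite the result into the form $\lnot\sem\vP$.

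For the base cases I would treat the atoms and the unit. When $\vP = \vPos\va$, the clauses for duality and interpretation give $\sem{\vDual{\vPos\va}} = \sem{\vNeg\va} = \lnot V(\va) = \lnot\sem{\vPos\va}$ immediately. When $\vP = \vNeg\va$, I additionally use that $\lnot$ is involutive: $\sem{\vDual{\vNeg\va}} = \sem{\vPos\va} = V(\va) = \lnot\lnot V(\va) = \lnot\sem{\vNeg\va}$. When $\vP = \vUnit$, we have $\sem{\vDual\vUnit} = \sem\vUnit = I$, so the claim reduces to $I = \lnot I$; this is precisely the \emph{mix} condition demanded by clause (1) of \Cref{defn:mav-algebra}.

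The inductive cases split according to the De Morgan table. For the multiplicative pair, taking $\vP\vTens\vQ$, I would compute $\sem{\vDual{\vP\vTens\vQ}} = \sem{\vDual\vP}\parr\sem{\vDual\vQ}$, then apply the induction hypotheses and the defining equation $x\parr y = \lnot(\lnot x\otimes\lnot y)$ from \Cref{prop:mav-algebra-consequences}, cancelling the resulting double negations by involutivity to reach $\lnot(\sem\vP\otimes\sem\vQ) = \lnot\sem{\vP\vTens\vQ}$; the $\vParr$ case is symmetric. The additive pair $\vWith$/$\vPlus$ is handled identically using $x\oplus y = \lnot(\lnot x\with\lnot y)$. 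The only case that genuinely uses the non-commutative structure is $\vP\vSeq\vQ$, where the computation lands on $\lnot\sem\vP\lhd\lnot\sem\vQ$ and is closed by the self-duality of $\lhd$, namely clause (3) of \Cref{defn:mav-algebra}, $\lnot(x\lhd y) = \lnot x\lhd\lnot y$.

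No single step is a real obstacle: the argument is mechanical once the correct identity is chosen in each case. The only points requiring care are remembering to invoke \emph{mix} in the unit case rather than treating $\lnot I$ as syntactically equal to $I$, and applying involutivity of $\lnot$ to discharge the double negations that the definitions of $\parr$ and $\oplus$ introduce in the multiplicative and additive cases.
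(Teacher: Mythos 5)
Your proof is correct and takes exactly the approach the paper intends: the paper states \Cref{lem:dual-ok} without an explicit proof (it is delegated to the Agda mechanisation), and the intended argument is precisely this routine structural induction. Your case analysis invokes the right identity in each case --- \emph{mix} for $\vUnit$, involutivity of $\lnot$ for the atom and multiplicative/additive cases, and self-duality of $\lhd$ for $\vSeq$ --- so there is nothing to correct.
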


\begin{theorem}\label[thm]{thm:soundness}
  The interpretation in \Cref{defn:mav-interpretation} is sound:
  for all structures $\vP$,
  if $\vP\vInferFrom*\vUnit$,
  then $I \leq \sem{\vP}$.
\end{theorem}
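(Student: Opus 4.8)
The plan is to reduce the statement to the soundness of a single inference step and then close the argument by a routine induction along the derivation. Since $\vInferFrom*$ is the reflexive, transitive closure of $\vInferFrom$ and the arrow runs from conclusion to premise, a proof $\vP \vInferFrom* \vUnit$ is a finite sequence of single steps leading from $\vP$ down to $\vUnit$. The key lemma I would prove first is: \emph{for every single step $\vP \vInferFrom \vQ$, and in any MAV-algebra under any valuation, $\sem{\vQ} \leq \sem{\vP}$.} Granting this, the theorem follows by induction on the length of the derivation. If the length is zero then $\vP = \vUnit$ and $\sem{\vP} = I$, so $I \leq \sem{\vP}$ holds. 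Otherwise the derivation factors as a first step $\vP \vInferFrom \vQ$ followed by a shorter proof $\vQ \vInferFrom* \vUnit$; combining $I \leq \sem{\vQ}$ (induction hypothesis) with $\sem{\vQ} \leq \sem{\vP}$ (the lemma) by transitivity of $\leq$ yields $I \leq \sem{\vP}$.

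Before the main induction I would record that $\sem{-}$ is well defined on structures taken modulo $\vEquiv$, i.e.\ $\vP \vEquiv \vQ$ implies $\sem{\vP} = \sem{\vQ}$. This is immediate: $\vEquiv$ is generated by the associativity, commutativity, and unit laws of $(\vSeq,\vUnit)$, $(\vTens,\vUnit)$, and $(\vParr,\vUnit)$, and \Cref{defn:mav-interpretation} interprets these as $(\lhd,I)$, $(\otimes,I)$, and $(\parr,I)$, which by \Cref{defn:mav-algebra} and \Cref{prop:mav-algebra-consequences} are (commutative) pomonoids and hence satisfy exactly those laws. This is what lets the single-step lemma be stated on structures rather than on raw syntax trees.

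The substance of the argument is the case analysis for the single-step lemma, where each rule matches one algebraic fact, read off using \Cref{lem:dual-ok} to replace $\sem{\vDual{\vP}}$ by $\lnot\sem{\vP}$. Concretely: \cref{rule:Switch} is linear distributivity (\Cref{prop:mav-algebra-consequences}, part 2); \cref{rule:Sequence} and \cref{rule:Medial} are $\lhd$ duoidal over $\parr$ and over $\with$ (parts 6 and 7); \cref{rule:CoSequence} is the defining inequality of $\otimes$ duoidal over $\lhd$ (\Cref{defn:duoidal}, via axiom 4 of \Cref{defn:mav-algebra}); \cref{rule:CoMedial} is $\oplus$ duoidal over $\lhd$ (part 8); \cref{rule:External} and \cref{rule:CoExternal} are the distributivity equalities (parts 5 and 4, the latter combined with commutativity of $\otimes$); and the additive rules \cref{rule:Left}, \cref{rule:Right}, \cref{rule:CoLeft}, \cref{rule:CoRight}, \cref{rule:Tidy}, \cref{rule:CoTidy} are immediate from the universal properties of the meet $\with$ and join $\oplus$. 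The only genuinely $*$-autonomous cases are \cref{rule:Interact}, requiring $I \leq \sem{\vP} \parr \lnot\sem{\vP}$, and \cref{rule:CoInteract}, requiring $\sem{\vP} \otimes \lnot\sem{\vP} \leq I$; both follow from the $*$-autonomous adjunction of \Cref{defn:star-autonomous} together with \emph{mix} ($\lnot I = I$). Finally \cref{rule:Mono} is handled by an inner induction on the one-hole context $\vC$: each context former is one of the five binary connectives, and all five are monotone in each argument — $\otimes$, $\parr$, $\lhd$ as pomonoid operations and $\with$, $\oplus$ as meet and join — so $\sem{\vQ} \leq \sem{\vP}$ propagates to $\sem{\vC\vPlug{\vQ}} \leq \sem{\vC\vPlug{\vP}}$.

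I expect no deep obstacle: the mathematical content has been front-loaded into \Cref{defn:mav-algebra} and \Cref{prop:mav-algebra-consequences}, which were chosen precisely so that each rule corresponds to one inequality. The main place demanding care — and the likeliest source of a mistake — is keeping every inequality oriented consistently with the fact that $\vInferFrom$ points from conclusion to premise, so that each axiom must be discharged as $\sem{(\text{premise})} \leq \sem{(\text{conclusion})}$; reversing a single \cref{rule:Switch}- or duoidal-style orientation would silently break the final transitive chain.
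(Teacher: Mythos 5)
Your proposal is correct and follows essentially the same route as the paper: the paper's one-line proof simply observes that each rule's required inequality is established in \Cref{prop:mav-algebra-consequences}, which is exactly your single-step lemma, with the outer induction along $\vInferFrom*$, the context induction for \cref{rule:Mono}, and the well-definedness modulo $\vEquiv$ left implicit. Your rule-by-rule assignment of inequalities (and the caution about orientation, since $\vInferFrom$ points from conclusion to premise) matches the paper's intent precisely.
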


\begin{proof}
  Each of the required inequalities has been established in
  \Cref{prop:mav-algebra-consequences}.
\end{proof}

\begin{remark}
  More generally, if $P \vInferFrom* Q$ in MAV, then
  $\sem{Q} \leq \sem{P}$ in an MAV-algebra. Note that the ordering is
  reversed! It will be reversed again in the definition of MAV-frame.
\end{remark}

\subsection{MAV-frames}
\label{sec:mav-frames}

To prove completeness of the normal proofs of MAV, we will construct a
particular MAV-algebra from the structures and normal proofs. Since
normal proofs do not \emph{a priori} have all the necessary structure
for an MAV-algebra, in the following sections we develop a procedure
to construct MAV-algebra from the lighter requirements of an
MAV-frame. In \Cref{sec:mav-cut-elimination} we will show that
the MAV-algebra constructed from the normal proof MAV-frame allows us
to prove that all proofs in MAV can be normalised to normal proofs.

\begin{definition}\label[defn]{defn:mav-frame}
  An \emph{MAV-frame} is a structure $(F, \leq, \parr, \lhd, i, +)$
  where $(F, \leq)$ is a partial order, $(F, \leq, \parr, i)$ is a
  commutative pomonoid, $(F, \leq, \lhd, i)$ is a pomonoid, $+$ is a
  binary monotone function on $(F, \leq)$, and these data satisfy the
  following inequalities:
  \begin{enumerate}
    \item $(w \lhd x) \parr (y \lhd z) \leq (w \parr y) \lhd (x \parr z)$
    \item $(x + y) \parr z \leq (x \parr z) + (y \parr z)$
    \item $(w \lhd x) + (y \lhd z) \leq (w + y) \lhd (x + z)$
    \item $i + i \leq i$
  \end{enumerate}
\end{definition}

\begin{remark}
  An MAV-frame is essentially two duoidal relationships and a
  distributivity law.
\end{remark}

\begin{remark}
  MAV-frames have a intuitive reading as CCS-like process algebras
  (see Milner \cite{Milner89:CC} for an introduction to CCS). If we
  assume the existence of a collection of ``action'' elements
  $a \in F$ and their duals $\overline{a} \in F$, satisfying
  $a \parr \overline{a} \leq i$, then we can read the constructs of an
  MAV-frame as parallel composition, sequential composition, and
  choice. The ordering is interpreted as a reduction relation. An
  interesting avenue for future work would be to discover to what
  extent MAV can be thought of as a logic for processes in this
  process algebra.
\end{remark}

\begin{remark}\label[remark]{remark:cka}
  MAV-frames (and MAV-algebras) are also very similar to the
  definition of a \emph{Concurrent Kleene Algebra} (CKA) due to Hoare,
  M{\"o}ller, Struth and Wehrman \cite[Definition
    4.1]{Hoare_2011}. One difference is that we do not assume that $+$
  is a join, nor do we assume the existence of infinitary
  joins. Consequently, we have no analogue of the Kleene Star. Another
  difference is that the duoidal relationship is reversed in
  MAV-frames, indicating that MAV-frames capture evolution of
  processes while MAV-algebras and CKA capture properties of
  processes.
\end{remark}

\begin{proposition}\label[prop]{prop:nmav-frame}
  The \emph{normal proof MAV-frame} \NMAV is the partial order arising
  as the quotient of the preorder formed from the structures of MAV
  and $\vP\leq\vQ$ if there is a normal derivation
  $\vP\vInferFrom*\vQ$, defined as
  $(\vSS, \vInferFrom*, \vParr, \vSeq, \vUnit, \vWith)$, where $\vSS$
  is the set of all MAV structures.  The required (in)equalities
  follow directly from the definition of $\vInferFrom*$ for normal
  proofs.
\end{proposition}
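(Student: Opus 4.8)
The plan is to check directly that $(\vSS, \vInferFrom*, \vParr, \vSeq, \vUnit, \vWith)$ satisfies every clause of \Cref{defn:mav-frame}, exploiting the fact that the MAV-frame axioms were designed so that each requirement is witnessed by a single \emph{normal} rule of MAV, read through the convention that $\vP \leq \vQ$ holds exactly when there is a normal derivation $\vP \vInferFrom* \vQ$. For the underlying order, note that normal $\vInferFrom*$ is the reflexive--transitive closure of normal inference, so it is a preorder on $\vSS$: the empty derivation gives reflexivity and concatenation of two normal derivations (introducing no new rules) gives transitivity. Quotienting by bi-directional normal derivability then yields a genuine partial order, with antisymmetry holding by construction of the quotient.

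Next I would verify that $\vParr$, $\vSeq$, and $\vWith$ are monotone and descend to the quotient. Monotonicity in each argument is an instance of \cref{rule:Mono}: a normal step $\vP \vInferFrom \vQ$ can be replayed inside each of the contexts $\vEmpty\vParr\vR$, $\vR\vParr\vEmpty$, $\vEmpty\vSeq\vR$, $\vR\vSeq\vEmpty$, $\vEmpty\vWith\vR$, and $\vR\vWith\vEmpty$, and replaying a step inside a context leaves the rule used unchanged, so normality is preserved. Applying this in both directions shows bi-directional normal derivability is a congruence, so the three operations are well defined on the quotient. The commutative-pomonoid laws for $(\vParr, \vUnit)$ and the pomonoid laws for $(\vSeq, \vUnit)$---associativity, the unit laws, and commutativity of $\vParr$---are supplied by the structural equivalence $\vEquiv$ (under which the rewriting is taken), and therefore hold as equalities in the quotient. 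Since $+ = \vWith$ is only required to be a monotone binary function, no further laws are needed for it.

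Finally I would match the four inequalities to rules: clause (1) is \cref{rule:Sequence} instantiated at $(\vP,\vQ,\vR,\vS) = (w,x,y,z)$; clause (2) is \cref{rule:External} at $(\vP,\vQ,\vR) = (x,y,z)$; clause (3) is \cref{rule:Medial} at $(\vP,\vQ,\vR,\vS) = (w,x,y,z)$; and clause (4) is \cref{rule:Tidy}. In each case the single rule instance is literally the desired inequality once $\leq$ is read as $\vInferFrom*$, and each of these rules is normal, so the witnessing step lives in the normal-derivation order.

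The one point that needs genuine care is purely the normality bookkeeping rather than any individual inequality: I must confirm that \cref{rule:Mono} preserves normality---so that monotonicity and congruence never smuggle in a synthetic rule---and that none of \cref{rule:Sequence}, \cref{rule:External}, \cref{rule:Medial}, and \cref{rule:Tidy} is a (\RuleName{Co-}) rule or an unrestricted use of \cref{rule:Interact}. Granting these, every clause of \Cref{defn:mav-frame} is immediate.
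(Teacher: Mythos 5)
Your proposal is correct and matches the paper's intent exactly: the paper dispatches this proposition with the single remark that the required (in)equalities ``follow directly from the definition of $\vInferFrom*$ for normal proofs,'' and your expansion---preorder from reflexive--transitive closure, quotient for antisymmetry, congruence via \cref{rule:Mono}, monoid laws from $\vEquiv$, and the four frame inequalities witnessed by the normal rules \cref{rule:Sequence}, \cref{rule:External}, \cref{rule:Medial}, and \cref{rule:Tidy}---is precisely the verification being elided, with the correct instantiations and the correct observation that all rules used are available in normal derivations.
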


\begin{remark}
  The construction of the MAV-frame \NMAV does not use the $\vTens$
  and $\vPlus$ structure of MAV directly.  This structure is recovered
  by duality from the other connectives by the constructions in the
  rest of this section. This corresponds to the fact that the
  \textsc{Co-X} rules in MAV that we wish to show admissible are the
  ones that mention the $\vTens$ and $\vPlus$ connectives, with the
  exception of \cref{rule:Switch}, which has a special role to play in
  \Cref{prop:embedding-sem} in mediating interaction.
\end{remark}

\subsection{Constructing MAV-algebras from MAV-frames}

We construct MAV-algebras from MAV-frames in a three step process. In
\Cref{sec:lower-sets}, we use lower sets and the Day construction to
add meets, joins and residuals for pomonoids to a partial order. This
construction creates joins freely, so we restrict to $+$-closed lower
sets (\ie, order ideals, but not necessarily over a
$\lor$-semilattice) in \Cref{sec:closed-lower-sets} to turn the $+$
operation in MAV-frames into joins. Restricting to $+$-closed lower
sets separates the Day construction of pomonoids into two separate
cases, depending on how the original pomonoid interacts with
$+$. Finally, we create the $*$-autonomous structure using the Chu
construction in \Cref{sec:chu}. The necessary duoidal structure is
maintained through each construction.

\subsubsection{Lower Sets and Day pomonoids}
\label{sec:lower-sets}

\begin{definition}\label[defn]{defn:lower-set}
  Given a partial order $(A, \leq)$, the set of lower sets
  $\LowerSet{A}$ consists of subsets $F \subseteq A$ that are
  down-closed: $x \in F$ and $y \leq x$ implies $y \in F$. Lower sets
  are ordered by inclusion. Define the embedding
  $\eta : A \to \LowerSet{A}$ as $\eta(x) = \{ y \mid y \leq x \}$.
\end{definition}

\begin{proposition}\label[defn]{defn:lower-set-embed-monotone}
  For any $(A, \leq)$, the function $\eta$ is monotone, and
  $(\LowerSet{A}, \subseteq)$ has meets and joins given by
  intersection and union respectively.
\end{proposition}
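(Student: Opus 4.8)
The plan is to verify the three claims --- monotonicity of $\eta$, and the characterisations of meets and joins --- directly from the definitions, the only substantive point being that down-closure is preserved by intersection and union.

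For monotonicity, I would argue from transitivity. Suppose $x \leq y$ in $A$. Given any $z \in \eta(x)$, we have $z \leq x$ by definition of $\eta$, so $z \leq y$ by transitivity of $\leq$, whence $z \in \eta(y)$. Thus $\eta(x) \subseteq \eta(y)$, as required.

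For the lattice structure, I would first observe that $(\mathcal{P}(A), \subseteq)$ is a complete lattice whose meets and joins are intersection and union, and that $(\LowerSet{A}, \subseteq)$ is the subposet consisting of the down-closed sets. It therefore suffices to show that $\LowerSet{A}$ is closed under arbitrary intersections and unions: closure then guarantees both that these operations land inside $\LowerSet{A}$ and that they compute the greatest lower bound and least upper bound \emph{there}, since the extremal properties are inherited from the ambient powerset lattice. So let $\{F_i\}_i$ be a family of lower sets. If $x \in \bigcap_i F_i$ and $y \leq x$, then $x \in F_i$ for every $i$, and down-closure of each $F_i$ gives $y \in F_i$ for every $i$, hence $y \in \bigcap_i F_i$; so the intersection is a lower set. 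Likewise, if $x \in \bigcup_i F_i$ and $y \leq x$, then $x \in F_j$ for some $j$, and down-closure of $F_j$ gives $y \in F_j \subseteq \bigcup_i F_i$; so the union is a lower set. The binary meet $F \cap G$ and binary join $F \cup G$ are the special cases needed later.

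I do not expect any genuine obstacle here: the only place the lower-set condition is used is in these two closure checks, and the greatest-lower-bound and least-upper-bound properties come for free by inheritance from $(\mathcal{P}(A), \subseteq)$.
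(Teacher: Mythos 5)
Your proof is correct and is exactly the routine verification the paper has in mind: the paper states this proposition without any written proof (deferring to the Agda mechanisation, \texttt{Algebra.Ordered.Construction.LowerSet}), since monotonicity of $\eta$ via transitivity and the closure of lower sets under intersections and unions---with the extremal properties inherited from the powerset lattice---are the standard argument. No gaps; your observation that down-closure is the only point where the lower-set condition is actually used is accurate.
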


\begin{proposition}\label[prop]{prop:day-construction}
  If $(\bullet, i)$ is a pomonoid on $(A, \leq)$, then there is a
  corresponding \emph{Day pomonoid} $(\Day{\bullet}, \Day{i})$ on
  $\LowerSet{A}$ defined as
  $F \Day\bullet G = \{ z \mid z \leq x \bullet y, x \in F, y \in G
    \}$ and $\Day{i} = \eta(i)$. Moreover:
  \begin{enumerate}
    \item If $(\bullet, i)$ is a commutative pomonoid, then so is
          $(\Day{\bullet}, \Day{i})$.
    \item $(\Day{\bullet}, \Day{i})$ has left and right residuals, which
          coincide when it is commutative. We will only be interested in
          residuals for commutative pomonoids, which we write as
          $F \rightblackspoon G$.
    \item The embedding preserves the monoid:
          $\eta(x \bullet y) = \eta(x) \Day\bullet \eta(y)$.
  \end{enumerate}
\end{proposition}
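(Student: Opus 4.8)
The plan is to verify each claim by unfolding the definition of $F \Day\bullet G$ and reducing every assertion to the corresponding property of the underlying pomonoid $(\bullet, i)$. First I would record two basic facts used throughout. Each $F \Day\bullet G$ is genuinely a lower set: if $w \leq z$ and $z \leq x \bullet y$ with $x \in F$, $y \in G$, then $w \leq x \bullet y$ by transitivity, so $w \in F \Day\bullet G$. And $\Day\bullet$ is monotone for $\subseteq$, since enlarging $F$ or $G$ only enlarges the set of available witnesses $x, y$. Both are immediate.

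The heart of the argument is the verification of the monoid laws, and here the key is a flattening lemma: using only monotonicity of $\bullet$, I would show that $z \in (F \Day\bullet G) \Day\bullet H$ holds iff there exist $x \in F$, $y \in G$, $w \in H$ with $z \leq (x \bullet y) \bullet w$, and symmetrically that $z \in F \Day\bullet (G \Day\bullet H)$ iff there exist such witnesses with $z \leq x \bullet (y \bullet w)$. The forward directions use that $u \leq x \bullet y$ implies $u \bullet w \leq (x \bullet y) \bullet w$; the reverse directions use that $x \bullet y$ is itself a witness for $F \Day\bullet G$. Associativity of $\Day\bullet$ then follows at once from associativity of $\bullet$. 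The unit laws are analogous but shorter: $\Day{i} \Day\bullet F = F$ reduces, via $x \leq i \Rightarrow x \bullet y \leq y$ and down-closure of $F$ for one inclusion, and via the witness $x = i$, $y = z$ for the other, to the unit law $i \bullet y = y$; the right unit law is symmetric.

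For the remaining clauses: commutativity (clause 1) is immediate, since when $\bullet$ is commutative the defining condition $z \leq x \bullet y$ is symmetric in the roles of $F$ and $G$. For residuals (clause 2), I would define $G \rightblackspoon H = \{ x \mid \forall y \in G,\ x \bullet y \in H \}$, which is a lower set because $H$ is down-closed and $\bullet$ is monotone, and then check the adjunction $F \Day\bullet G \subseteq H$ iff $F \subseteq G \rightblackspoon H$ by a direct two-way argument (the nontrivial step in each direction again being that $x \bullet y$ is a witness, together with down-closure of $H$). In the non-commutative case the symmetric definition $\{ y \mid \forall x \in F,\ x \bullet y \in H \}$ gives the left residual, and the two collapse to one when $\bullet$ is commutative. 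Finally, clause 3, $\eta(x \bullet y) = \eta(x) \Day\bullet \eta(y)$, is a short double inclusion: $\subseteq$ takes the witnesses $x, y$ themselves, and $\supseteq$ uses monotonicity of $\bullet$ to push $a \leq x$, $b \leq y$ up to $a \bullet b \leq x \bullet y$. The only place demanding care is the flattening lemma underlying associativity; once it is in hand, every other item is a routine unfolding.
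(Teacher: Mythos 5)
Your proof is correct, and it is essentially the argument the paper intends: the paper states \Cref{prop:day-construction} without an in-text proof, deferring to the standard Day construction restricted to posets and to the Agda mechanisation, where the verification proceeds exactly as you describe --- unfolding $F \Day\bullet G$, checking monotonicity and the monoid laws via the associativity/unit laws of $\bullet$ plus down-closure, defining the residual as $G \rightblackspoon H = \{ x \mid \forall y \in G,\ x \bullet y \in H \}$, and checking the adjunction and the embedding equation by double inclusion. Your explicit flattening lemma for associativity and the observation that down-closure of $H$ and monotonicity of $\bullet$ make the residual a lower set are precisely the points that need care, and you have them right.
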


\begin{remark}
  \Cref{prop:day-construction} is the Day monoidal product on functor
  categories \cite{Day_1970} restricted to the case of partial orders
  and lower sets.
\end{remark}

\begin{remark}
  When $(A, \leq)$ is an MAV-frame, \Cref{prop:day-construction} gives us two pomonoids
  $(\Day{\parr}, \Day{I})$ and $(\Day{\lhd}, \Day{I})$ on
  $\LowerSet{A}$. Moreover, the next proposition states that the
  duoidal relationship between these monoids is preserved by the Day
  construction:
\end{remark}

\begin{proposition}\label[prop]{prop:lower-set-duoidal}
  If $(\bullet, i)$ is duoidal over $(\lhd, j)$ then
  $(\Day{\bullet}, \Day{i})$ is duoidal over $(\Day{\lhd}, \Day{j})$.
\end{proposition}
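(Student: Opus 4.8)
The plan is to verify directly the four inequalities of \Cref{defn:duoidal} for the pair $(\Day{\bullet}, \Day{i})$ over $(\Day{\lhd}, \Day{j})$, using only two facts already established: that $\eta$ is monotone (\Cref{defn:lower-set-embed-monotone}), and that it carries each pomonoid to the corresponding Day pomonoid, so that $\eta(x \bullet y) = \eta(x) \Day{\bullet} \eta(y)$ and likewise for $\lhd$ (\Cref{prop:day-construction}, item 3). Three of the four conditions are then immediate. The unit inclusion $\Day{i} \subseteq \Day{j}$ follows from $i \leq j$ by monotonicity of $\eta$, since $\Day{i} = \eta(i)$ and $\Day{j} = \eta(j)$. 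The condition $\Day{j} \Day{\bullet} \Day{j} \subseteq \Day{j}$ rewrites to $\eta(j \bullet j) \subseteq \eta(j)$ using preservation of the monoid, and then follows from $j \bullet j \leq j$; symmetrically $\Day{i} \subseteq \Day{i} \Day{\lhd} \Day{i}$ reduces to $i \leq i \lhd i$.

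The one substantive step is the interchange inequality
\[
  (W \Day{\lhd} X) \Day{\bullet} (Y \Day{\lhd} Z) \subseteq (W \Day{\bullet} Y) \Day{\lhd} (X \Day{\bullet} Z)
\]
for arbitrary lower sets $W, X, Y, Z$. I would prove this by unfolding the definition of the Day products on an arbitrary element $t$ of the left-hand side. Membership gives $t \leq p \bullet q$ with $p \in W \Day{\lhd} X$ and $q \in Y \Day{\lhd} Z$; unfolding once more produces $w \in W$, $x \in X$ with $p \leq w \lhd x$, and $y \in Y$, $z \in Z$ with $q \leq y \lhd z$. Monotonicity of $\bullet$ then yields $t \leq (w \lhd x) \bullet (y \lhd z)$, and the pointwise duoidal hypothesis (item 1 of \Cref{defn:duoidal} for $\bullet$ over $\lhd$) gives $t \leq (w \bullet y) \lhd (x \bullet z)$. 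Finally $w \bullet y \in W \Day{\bullet} Y$ and $x \bullet z \in X \Day{\bullet} Z$ hold by reflexivity, taking the defining witnesses $w, y$ and $x, z$ respectively, so $t \leq (w \bullet y) \lhd (x \bullet z)$ exhibits $t$ as a member of the right-hand side. Down-closure of all the sets involved is precisely what licenses each of these ``$t \leq \text{witness}$'' steps.

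I do not expect any genuinely hard part here: the whole argument is a chain of monotonicity and transitivity steps once the membership predicates are unfolded. The only thing requiring care is the bookkeeping of the nested existential witnesses introduced by peeling apart three layers of Day products, and checking that each inequality is oriented correctly — the Day product being a down-closure means membership is always witnessed by a single $\leq$ into a product of chosen representatives, which keeps the orientations consistent throughout.
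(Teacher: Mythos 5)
Your proof is correct. The paper itself gives no written proof of \Cref{prop:lower-set-duoidal} (it is deferred to the Agda mechanisation, module \texttt{DayDuoidal}), and your argument --- reducing the three unit conditions to monotonicity of $\eta$ plus its preservation of the monoid operations, and proving the interchange inequality by unfolding the Day products, applying the pointwise duoidal law, and re-packaging the witnesses by reflexivity --- is exactly the routine verification the paper elides.
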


\subsubsection{$+$-closed Lower Sets}
\label{sec:closed-lower-sets}

In the Phase Semantics, structures are interpreted as elements that are
fixed points of a closure operator defined by a double negation with
respect to the monoid on the original frame. This closure operator
generates a partial order of ``facts'' whose meets and joins exactly
correspond to the syntactic ones when the original monoid is derived
from the proofs. As we mentioned in the introduction, the presence of
a self-dual operator in BV means that we cannot use double negation
closure, and we have to proceed more deliberately to preserve
join-like structure in an MAV-frame when building MAV-algebras. We do
this by defining $+$-closed lower sets as those that are closed under
finite $+$-combinations of their elements. This leads to a closure
operator on lower sets that allows us to immediately deduce that
$+$-closed lower sets form a lattice. We also preserve the
Day-pomonoids from lower sets, but in two different ways, depending on
how the original pomonoid interacts with $+$. In \Cref{prop:closed-monoid-distrib} we handle pomonoids that distribute
over $+$, and in \Cref{prop:closed-monoid-duoidal} we
handle pomonoids that are duoidal under $+$. We need these
constructions to lift the $\parr$ and $\lhd$ pomonoids from
MAV-frames to $+$-closed lower sets. Finally in this section, we show
that duoidal structure on lower sets from
\Cref{prop:lower-set-duoidal} is preserved in $+$-closed lower sets.

For this section, we assume that $(A, \leq)$ is a partial order with a
monotone binary operation $+ : A \times A \to A$ (we do not assume
that $+$ is a join or even a pomonoid.)

\begin{definition}\label[defn]{defn:closed-lower-set}
  A lower set $F \in \LowerSet{A}$ is \emph{$+$-closed} if $x \in F$
  and $y \in F$ imply $x + y \in F$. $+$-closed lower sets are ordered
  by subset inclusion and form a partial order
  $(\ClosedLowerSet{A}, \subseteq)$.
\end{definition}

\begin{proposition}
  Let $U : \ClosedLowerSet{A} \to \LowerSet{A}$ be the ``forgetful''
  function that forgets the $+$-closed property. There is a monotone
  function $\alpha : \LowerSet{A} \to \ClosedLowerSet{A}$ such that
  for all $F \in \ClosedLowerSet{A}$, $\alpha(U F) = F$ and for all
  $F \in \LowerSet{A}$, $F \subseteq U (\alpha F)$.
\end{proposition}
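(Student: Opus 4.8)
The plan is to define $\alpha(F)$ to be the least $+$-closed lower set containing $F$, and then to check that this construction exhibits $\alpha$ as a left adjoint to the forgetful map $U$ (equivalently, $\alpha \circ U$ is a monotone closure operator whose fixed points are exactly the $+$-closed lower sets). The two stated properties are precisely the triangle identities of this Galois connection: $F \subseteq U(\alpha F)$ is the unit, pointing each lower set into its closure, and $\alpha(U F) = F$ says the closure acts trivially on something already closed.

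First I would give a concrete description of $\alpha(F)$. Since $+$ is assumed only to be monotone --- not associative, commutative, or unital --- I cannot simply form sums $x_1 + \cdots + x_n$. Instead I define a \emph{$+$-combination} of $F$ inductively: every element of $F$ is a $+$-combination, and if $s$ and $t$ are $+$-combinations then so is $s + t$ (so combinations are finite binary trees with leaves labelled by elements of $F$). I then set $\alpha(F) = \{\, z \mid z \leq t \text{ for some } +\text{-combination } t \text{ of } F \,\}$, the down-closure of the set of all combinations.

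Then I would verify the required facts. That $\alpha(F)$ is a lower set is immediate from its definition as a down-closure. That it is $+$-closed uses monotonicity of $+$: if $x \leq s$ and $y \leq t$ for combinations $s, t$, then $x + y \leq s + t$ and $s + t$ is again a combination, so $x + y \in \alpha(F)$; hence $\alpha(F) \in \ClosedLowerSet{A}$. Since each element of $F$ is a trivial (single-leaf) combination, $F \subseteq \alpha(F)$, which gives $F \subseteq U(\alpha F)$ directly and also monotonicity of $\alpha$: from $F \subseteq G$, every combination of $F$ is a combination of $G$, so $\alpha(F) \subseteq \alpha(G)$. Finally, for $F$ already $+$-closed, a structural induction on the combination tree shows every combination of elements of $F$ lies in $F$ (the base case is membership, the inductive step uses $+$-closure), so $\alpha(U F) \subseteq F$; combined with $F \subseteq \alpha(U F)$ this yields $\alpha(U F) = F$.

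The only subtlety --- and the point to be careful about --- is the absence of any algebraic laws on $+$. One is tempted to flatten combinations into iterated sums, but without associativity this is unavailable, so both the definition of $\alpha(F)$ and the $+$-closure argument must range over arbitrary bracketings. The inductive definition of $+$-combination, together with the structural induction used for the fixed-point property, handles this cleanly. (Equivalently one could define $\alpha(F)$ as the intersection of all $+$-closed lower sets containing $F$; the explicit combination description is the more convenient form for the Day-pomonoid computations in the propositions that follow.)
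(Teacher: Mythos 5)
Your proposal is correct and matches the paper's proof essentially step for step: the paper likewise defines $\alpha(F)$ as the down-closure of sums of inductively built binary $+$-combination trees (its $\mathrm{ctxt}(F)$ with $\mathsf{leaf}$ and $\mathsf{node}$ constructors), obtains $+$-closure by joining trees with $\mathsf{node}$, gets $F \subseteq U(\alpha F)$ from $\mathsf{leaf}$, and gets $\alpha(U F) = F$ because closure adds nothing to an already $+$-closed set. Your write-up is, if anything, slightly more explicit than the paper's about the monotonicity of $+$ and the structural induction behind the fixed-point property.
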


\begin{proof}
  To define $\alpha$, we close lower sets under all
  $+$-combinations. To this end, for $F \in \LowerSet{A}$, define
  $\mathrm{ctxt}(F)$, the set of all $+$-combinations of $F$
  inductively built from constructors
  $\mathsf{leaf} : F \to \mathrm{ctxt}(F)$ and
  $\mathsf{node} : \mathrm{ctxt}(F) \times \mathrm{ctxt}(F) \to
    \mathrm{ctxt}(F)$. We define the \emph{sum} of a context as
  $\mathit{sum}(\mathsf{leaf}~x) = x$ and
  $\mathit{sum}(\mathsf{node}(c,d)) = \mathit{sum}(c) +
    \mathit{sum}(d)$. Now define:
  $\alpha(F) = \{ x \mid c \in \mathrm{ctxt}(F), x \leq
    \mathit{sum}(c) \}$. This is $+$-closed, by taking the
  $\mathsf{node}$ combination of contexts. $\alpha \circ U$ is
  idempotent because $\alpha$ does not introduce any elements to lower
  sets that are already closed. For arbitrary lower sets $F$,
  $F \subseteq U(\alpha F)$ by the $\mathsf{leaf}$ constructor.
\end{proof}

\begin{definition}
  Define the embedding $\eta^+ : A \to \ClosedLowerSet{A}$ as
  $\eta^+(x) = \alpha(\eta(x))$.
\end{definition}

\begin{remark}
  By this proposition, $U \circ \alpha$ is a closure operator on
  $\LowerSet{A}$ \cite{Davey_2002}, and the closed elements are
  those of $\ClosedLowerSet{A}$. The next proposition is standard for
  showing that meets and joins exist on the closed elements for some
  closure operator.
\end{remark}

\begin{proposition}
  $(\ClosedLowerSet{A}, \subseteq)$ has all meets and joins. In the
  binary case, meets are defined by intersection and joins are defined
  by $F \lor G = \alpha (U F \cup U G)$.
\end{proposition}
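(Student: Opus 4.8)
The plan is to derive this from the standard fact that the fixed points of a closure operator on a complete lattice again form a complete lattice, instantiated at the closure operator $U \circ \alpha$ on $\LowerSet{A}$. The preceding proposition already supplies all the substantive ingredients: that $\LowerSet{A}$ is a complete lattice with meets given by intersection and joins by union, that $\alpha$ is monotone with $F \subseteq U(\alpha F)$ for every lower set $F$, and that $\alpha(U F) = F$ for every $+$-closed $F$; the accompanying remark records that $U \circ \alpha$ is therefore a closure operator whose fixed points are exactly the elements of $\ClosedLowerSet{A}$. What remains is simply to read off the explicit formulas for meets and joins.

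For meets I would argue directly that intersections stay inside $\ClosedLowerSet{A}$. Given a family $\{F_k\}$ of $+$-closed lower sets, the intersection $\bigcap_k F_k$ is a lower set, since both membership and down-closure transfer through an intersection, and it is $+$-closed because $x, y \in \bigcap_k F_k$ forces $x, y \in F_k$ for every $k$, hence $x + y \in F_k$ for every $k$, hence $x + y \in \bigcap_k F_k$. As inclusion is the order and set-theoretic intersection is already the greatest lower bound among arbitrary subsets, $\bigcap_k F_k$ is the meet in $\ClosedLowerSet{A}$; the binary case is $F \cap G$, and the empty meet is the top element $A$.

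For joins I would verify that $\alpha(U F \cup U G)$ is the least upper bound of $F$ and $G$. It is $+$-closed by the construction of $\alpha$, so it lives in $\ClosedLowerSet{A}$. It is an upper bound since $U F, U G \subseteq U F \cup U G \subseteq U(\alpha(U F \cup U G))$, using $F \subseteq U(\alpha F)$, so both $F$ and $G$ are contained in it. It is least: if $H \in \ClosedLowerSet{A}$ satisfies $F \subseteq H$ and $G \subseteq H$, then $U F \cup U G \subseteq U H$, and applying the monotone $\alpha$ together with $\alpha(U H) = H$ gives $\alpha(U F \cup U G) \subseteq H$. The arbitrary case is identical, taking $\alpha(\bigcup_k U F_k)$.

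I do not expect a genuine obstacle here: the only thing to watch is the bookkeeping between a $+$-closed lower set and its underlying lower set as one passes across $U$ and $\alpha$. All the real content — that $\alpha$ closes under $+$-combinations, is monotone, and fixes already-closed sets — has been discharged in the previous proposition, so the present statement is just the familiar packaging of a closure operator into a complete lattice.
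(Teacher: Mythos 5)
Your proposal is correct and follows exactly the route the paper intends: the paper gives no explicit proof, instead noting in the preceding remark that $U \circ \alpha$ is a closure operator and that the result is the standard fact about closed elements of a closure operator, which is precisely the argument you spell out (intersections are $+$-closed, and $\alpha(U F \cup U G)$ is the least upper bound via monotonicity of $\alpha$, $F \subseteq U(\alpha F)$, and $\alpha(U H) = H$ for closed $H$). Your write-up simply makes explicit the details the paper delegates to the standard reference.
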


\begin{proposition}\label[prop]{prop:closed-eta-preserve-joins}
  $\eta^+(x + y) \subseteq \eta^+(x) \lor \eta^+(y)$.
\end{proposition}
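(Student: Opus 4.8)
The plan is to reduce the claim to the single membership $x + y \in \eta^+(x) \lor \eta^+(y)$ and then propagate it through the closure. First I would unfold the join using the formula from the previous proposition, $\eta^+(x) \lor \eta^+(y) = \alpha(U\eta^+(x) \cup U\eta^+(y))$. Since $x \in \eta(x) \subseteq U\alpha(\eta(x)) = U\eta^+(x)$ by the $\mathsf{leaf}$ inclusion $F \subseteq U(\alpha F)$, and symmetrically $y \in U\eta^+(y)$, both $x$ and $y$ belong to the union $U\eta^+(x) \cup U\eta^+(y)$. Forming the context $\mathsf{node}(\mathsf{leaf}~x, \mathsf{leaf}~y)$ over this union, whose sum is $x + y$, the definition of $\alpha$ immediately yields $x + y \in \alpha(U\eta^+(x) \cup U\eta^+(y)) = \eta^+(x) \lor \eta^+(y)$.

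Next, because the join is a $+$-closed lower set, and hence in particular down-closed, every $z \leq x + y$ also lies in it; that is, $\eta(x+y) \subseteq U(\eta^+(x) \lor \eta^+(y))$. Finally I would apply the monotone closure $\alpha$ to both sides and use the retraction equation $\alpha(U F) = F$ for closed $F$ established in the previous proposition: $\eta^+(x+y) = \alpha(\eta(x+y)) \subseteq \alpha(U(\eta^+(x) \lor \eta^+(y))) = \eta^+(x) \lor \eta^+(y)$, which is exactly the desired inclusion.

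I do not anticipate a genuine obstacle here; the argument is a short piece of closure-operator bookkeeping. The only point requiring care is the interplay between the forgetful map $U$ and the closure $\alpha$, and in particular making explicit that the single application of $+$ encoded by $\mathsf{node}(\mathsf{leaf}~x, \mathsf{leaf}~y)$ is legitimate precisely because $x$ and $y$ are drawn from the \emph{union} $U\eta^+(x) \cup U\eta^+(y)$ rather than from any one $+$-closed set on its own.
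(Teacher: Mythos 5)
Your proof is correct. The paper states this proposition without giving a proof (deferring it to the Agda mechanisation), and your argument is exactly the one its definitions intend: you obtain $x + y \in \eta^+(x) \lor \eta^+(y)$ by forming the context $\mathsf{node}(\mathsf{leaf}\,x, \mathsf{leaf}\,y)$ over the \emph{union} $U\eta^+(x) \cup U\eta^+(y)$ (correctly noting that neither closed set alone contains $x+y$), and then transfer the inclusion $\eta(x+y) \subseteq U(\eta^+(x) \lor \eta^+(y))$ through the monotone closure $\alpha$ using the retraction $\alpha(UF) = F$.
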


\begin{remark}
  \Cref{prop:closed-eta-preserve-joins} is the reason for
  requiring $+$-closure. This property will allow us to prove the
  crucial embedding property for all structures in \Cref{sec:mav-cut-elimination}.
\end{remark}

\begin{proposition}\label[prop]{prop:closed-monoid-distrib}
  For a commutative pomonoid $(\bullet, i)$ on $(A, \leq)$ that
  distributes over $+$
  (i.e., $(x + y) \bullet z \leq (x \bullet z) + (y \bullet z)$ holds), we have that
  $F \ClosedDay{\bullet} G = \alpha(U F \Day{\bullet} U G)$ and
  $\ClosedDay{j} = \alpha(\Day{j})$ define a residuated commutative
  pomonoid on $\ClosedLowerSet{A}$. Moreover,
  $\eta^+(x \bullet y) = \eta^+(x) \ClosedDay{\bullet} \eta^+(y)$.
\end{proposition}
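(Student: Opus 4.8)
The plan is to reduce everything to a single compatibility property between the closure operator $\kappa = U \circ \alpha$ on $\LowerSet{A}$ and the Day pomonoid $(\Day{\bullet}, \Day{i})$ supplied by \Cref{prop:day-construction}; the distributivity hypothesis on $\bullet$ will enter exactly here, and this is a partial-order instance of the standard ``nucleus'' argument for transporting a monoid along a closure operator. Once the compatibility property is in hand, the monoid laws, residuation, and the embedding equation all follow by formal manipulation of the Galois connection $\alpha \dashv U$, which is expressed by the facts $\alpha(U F) = F$ and $F \subseteq U(\alpha F)$ from the preceding proposition (and which gives, for $F \in \LowerSet{A}$ and $H \in \ClosedLowerSet{A}$, the equivalence $\alpha F \subseteq H \iff F \subseteq U H$, together with injectivity of $U$).

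\emph{Key lemma (compatibility).} For all $F, G \in \LowerSet{A}$,
\[
  U(\alpha F) \Day{\bullet} U(\alpha G) \subseteq U\bigl(\alpha(F \Day{\bullet} G)\bigr).
\]
To prove this, recall that $a \in U(\alpha F)$ means $a \leq \mathit{sum}(c)$ for some $c \in \mathrm{ctxt}(F)$, and likewise for $G$. A generating element of the left-hand side is some $z \leq a \bullet b$ with $a \leq \mathit{sum}(c)$ and $b \leq \mathit{sum}(d)$, so by monotonicity $z \leq \mathit{sum}(c) \bullet \mathit{sum}(d)$. I then show, by induction on the tree structure of $c$ and $d$, that $\mathit{sum}(c) \bullet \mathit{sum}(d) \leq \mathit{sum}(e)$ for some $e \in \mathrm{ctxt}(F \Day{\bullet} G)$. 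In the base case $c = \mathsf{leaf}~x$, $d = \mathsf{leaf}~y$ we have $x \bullet y \in F \Day{\bullet} G$, so $e = \mathsf{leaf}(x \bullet y)$ works; in a $\mathsf{node}$ case the distributivity law $(x+y)\bullet z \leq (x \bullet z) + (y \bullet z)$ (and its mirror image via commutativity) pushes the product inside the sum, after which monotonicity of $+$ and the inductive hypotheses assemble the required $\mathsf{node}$ context. This gives $z \in U(\alpha(F \Day{\bullet} G))$. Since $\kappa$ is monotone and idempotent, the lemma upgrades to the equation $\kappa(\kappa F \Day{\bullet} \kappa G) = \kappa(F \Day{\bullet} G)$, and injectivity of $U$ then yields $\alpha(\kappa F \Day{\bullet} \kappa G) = \alpha(F \Day{\bullet} G)$, the form I use below.

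With compatibility established, the pomonoid structure is routine. Monotonicity of $\ClosedDay{\bullet}$ is immediate from monotonicity of $\alpha$, $U$, and $\Day{\bullet}$; commutativity is inherited from $\Day{\bullet}$. For associativity I expand $(F \ClosedDay{\bullet} G) \ClosedDay{\bullet} H = \alpha\bigl(\kappa(UF \Day{\bullet} UG) \Day{\bullet} UH\bigr)$, rewrite $UH$ as $\kappa(UH)$ (valid since $H$ is closed), apply the compatibility equation to strip both closures, and invoke associativity of $\Day{\bullet}$; the symmetric computation gives the same value. The unit law for $\ClosedDay{i} = \alpha(\Day{i}) = \eta^+(i)$ reduces to $\alpha\bigl(\kappa(\Day{i}) \Day{\bullet} UF\bigr) = \alpha(\Day{i} \Day{\bullet} UF) = \alpha(UF) = F$, again using compatibility, the unit law for $\Day{\bullet}$, and $\alpha(UF) = F$.

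For residuation I take the residual $\rightblackspoon$ of $\Day{\bullet}$ from \Cref{prop:day-construction} and show that the residual of a $+$-closed lower set is $+$-closed: unwinding definitions, $x \in (UG \rightblackspoon UH)$ iff $x \bullet g \in UH$ for all $g \in UG$, so if $x, y$ both lie in this set then, by distributivity, $(x+y)\bullet g \leq (x \bullet g) + (y \bullet g) \in UH$ (using that $UH$ is $+$-closed and down-closed), whence $x + y$ lies in it too. Thus $UG \rightblackspoon UH$ is already closed when $H$ is, and the chain $F \ClosedDay{\bullet} G \subseteq H \iff UF \Day{\bullet} UG \subseteq UH \iff UF \subseteq UG \rightblackspoon UH \iff F \subseteq \alpha(UG \rightblackspoon UH)$ exhibits $G \mapsto \alpha(UG \rightblackspoon UH)$ as the residual on $\ClosedLowerSet{A}$. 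The embedding equation follows directly from the key lemma: $\eta^+(x) \ClosedDay{\bullet} \eta^+(y) = \alpha\bigl(\kappa(\eta(x)) \Day{\bullet} \kappa(\eta(y))\bigr) = \alpha(\eta(x) \Day{\bullet} \eta(y)) = \alpha(\eta(x \bullet y)) = \eta^+(x \bullet y)$, where the middle step is compatibility and the next is \Cref{prop:day-construction}(3). The main obstacle is the key lemma, where the inductive use of distributivity across the $\mathrm{ctxt}$ tree is the only genuinely new ingredient; everything else is adjunction bookkeeping, and distributivity is needed only there and for the analogous closure of residuals.
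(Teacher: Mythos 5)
Your proposal is correct and takes essentially the same route as the paper: your inductive construction of the context $e$ from $c$ and $d$ is exactly the paper's tree-multiplication $\bullet^c : \mathrm{ctxt}(F) \times \mathrm{ctxt}(G) \to \mathrm{ctxt}(F \Day{\bullet} G)$ with $\mathit{sum}(c) \bullet \mathit{sum}(d) \leq \mathit{sum}(c \bullet^c d)$, and your compatibility equation $\alpha(\kappa F \Day{\bullet} \kappa G) = \alpha(F \Day{\bullet} G)$ is precisely the paper's statement that $\alpha$ preserves the monoid operation, i.e.\ $\alpha F \ClosedDay{\bullet} \alpha G = \alpha(F \Day{\bullet} G)$. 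Both proofs then transfer the pomonoid laws along this equation and observe that the lower-set residual of a $+$-closed set is already $+$-closed by distributivity; your write-up just makes the adjunction bookkeeping explicit.
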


\begin{proof}
  Define an operation
  $\bullet^c : \mathrm{ctxt}(F) \times \mathrm{ctxt}(G) \to
    \mathrm{ctxt}(F \Day{\bullet} G)$ that ``multiplies'' two trees,
  such that
  $\mathit{sum}(c) \bullet \mathit{sum}(d) \leq \mathit{sum}(c
    \bullet^c d)$, using the distributivity. This allows us to show that
  $\alpha$ preserves the monoid operation:
  $\alpha F \ClosedDay{\bullet} \alpha G = \alpha (F \Day{\bullet}
    G)$. With this, we can show that the monotonicity, associativity,
  unit, and commutativity properties of $\Day{\bullet}$ transfer over
  to $\ClosedDay{\bullet}$. The definition of the residual from lower
  sets is already $+$-closed, by distributivity.
\end{proof}

\begin{proposition}\label[prop]{prop:closed-monoid-duoidal}
  For a pomonoid $(\lhd, j)$ on $(A, \leq)$, if this satisfies
  $(w \lhd x) + (y \lhd z) \leq (w + y) \lhd (x + z)$ then the Day
  construction
  $F \Day{\lhd} G = \{ z \mid z \leq x \lhd y, x \in F, y \in G \}$ on
  lower sets is $+$-closed when $F$ and $G$ are. We write
  $F \ClosedDay{\lhd} G$ to indicate when we mean this construction as
  an operation on $+$-closed lower sets. If $j + j \leq j$, then the
  Day unit $\Day{j} = \eta(j)$ is also closed and we write it as
  $\ClosedDay{j} \in \ClosedLowerSet{A}$. Together
  $(\ClosedDay{\lhd}, \ClosedDay{j})$ form a pomonoid on
  $(\ClosedLowerSet{A}, \subseteq)$. Moreover,
  $\eta^+(x \lhd y) \leq \eta^+(x) \ClosedDay{\lhd} \eta^+(y)$.
\end{proposition}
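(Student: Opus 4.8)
The plan is to prove the four claims in turn: that $F \Day{\lhd} G$ is $+$-closed whenever $F$ and $G$ are, that $\Day{j}$ is $+$-closed when $j + j \leq j$, that the resulting operations form a pomonoid on $\ClosedLowerSet{A}$, and finally the embedding inequality. The heart of the argument, and the step I expect to be the main obstacle, is the first: showing that the Day product of two $+$-closed lower sets is again $+$-closed. This is precisely where the hypothesis $(w \lhd x) + (y \lhd z) \leq (w + y) \lhd (x + z)$ is used. Given $u, v \in F \Day{\lhd} G$, I would unfold the definition to obtain $x_1, x_2 \in F$ and $y_1, y_2 \in G$ with $u \leq x_1 \lhd y_1$ and $v \leq x_2 \lhd y_2$. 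Monotonicity of $+$ gives $u + v \leq (x_1 \lhd y_1) + (x_2 \lhd y_2)$, and the hypothesis then yields $u + v \leq (x_1 + x_2) \lhd (y_1 + y_2)$. Since $F$ and $G$ are $+$-closed, $x_1 + x_2 \in F$ and $y_1 + y_2 \in G$, so $u + v$ witnesses membership in $F \Day{\lhd} G$. This is the only place the inequality is used, and it is exactly what distinguishes this case from the distributive case of \Cref{prop:closed-monoid-distrib}, where re-closing via $\alpha$ is unavoidable.

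For the unit, I would observe that $\Day{j} = \eta(j) = \{ z \mid z \leq j \}$, so if $u, v \leq j$ then $u + v \leq j + j \leq j$, whence $u + v \in \eta(j)$; thus $\Day{j}$ is already $+$-closed and may be taken as $\ClosedDay{j}$ with no application of $\alpha$. For the pomonoid laws, the key observation is that, because the Day product and unit of closed sets are already closed, the operation $\ClosedDay{\lhd}$ and the unit $\ClosedDay{j}$ agree on the nose with the Day pomonoid $(\Day{\lhd}, \Day{j})$ of \Cref{prop:day-construction}; that is, $U(F \ClosedDay{\lhd} G) = U F \Day{\lhd} U G$ and $U\,\ClosedDay{j} = \Day{j}$. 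Monotonicity, associativity, and the unit laws therefore restrict directly from $(\LowerSet{A}, \subseteq)$ to $(\ClosedLowerSet{A}, \subseteq)$, since every element and every intermediate result involved is $+$-closed and the inclusion order is simply inherited.

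Finally, for $\eta^+(x \lhd y) \leq \eta^+(x) \ClosedDay{\lhd} \eta^+(y)$, I would invoke the Galois connection $\alpha \dashv U$ that follows from the preceding proposition, namely that for a lower set $H$ and a $+$-closed lower set $K$ one has $\alpha H \subseteq K$ iff $H \subseteq U K$ (the forward direction from $H \subseteq U(\alpha H)$, the reverse from monotonicity of $\alpha$ and $\alpha(U K) = K$). Since $\eta^+(x \lhd y) = \alpha(\eta(x \lhd y))$, it suffices to show $\eta(x \lhd y) \subseteq U(\eta^+(x) \ClosedDay{\lhd} \eta^+(y))$. By \Cref{prop:day-construction} we have $\eta(x \lhd y) = \eta(x) \Day{\lhd} \eta(y)$, while the right-hand side equals $U\eta^+(x) \Day{\lhd} U\eta^+(y)$ by the agreement noted above. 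The inclusions $\eta(x) \subseteq U\eta^+(x)$ and $\eta(y) \subseteq U\eta^+(y)$, together with monotonicity of $\Day{\lhd}$, then give the required containment, completing the proof. It is worth flagging that here, unlike in the distributive case, only the inequality survives: $\alpha$ need not commute with $\Day{\lhd}$ on the nose for this non-distributive interaction, so equality cannot be expected in general.
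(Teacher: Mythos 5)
Your proposal is correct and follows essentially the same route as the paper's (very terse) proof: the duoidal inequality makes $F \Day{\lhd} G$ automatically $+$-closed by exactly the calculation you give, $j + j \leq j$ makes $\eta(j)$ closed, and the pomonoid laws and embedding inequality transfer directly since $\ClosedDay{\lhd}$ is just the restriction of $\Day{\lhd}$. Your explicit use of the Galois connection $\alpha \dashv U$ for the embedding inequality, and your closing remark that only an inequality (not an equality) survives here in contrast to \Cref{prop:closed-monoid-distrib}, are faithful elaborations of what the paper leaves implicit.
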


\begin{proof}
  Since $+$ is duoidal over $(\lhd, j)$, the Day monoid $\Day{\lhd}$
  is automatically $+$-closed by calculation. The monoid structure
  directly transfers. Similarly, $\eta(j)$ is automatically $+$-closed
  since $j + j \leq j$.
\end{proof}

\begin{remark}
  Generalising the situation for the unit $j$ in \Cref{prop:closed-monoid-duoidal}, $\eta(x)$ is closed for any $x$
  such that $x + x \leq x$. Note that if $+$ were a join on
  $(A, \leq)$, then this would automatically be satisfied.
\end{remark}

\begin{remark}
  We have used the same decoration $\ClosedDay{\bullet}$ and
  $\ClosedDay{\lhd}$ for two separate constructions of pomonoids on
  $+$-closed lower sets. We will be careful to distinguish which we
  mean: in our present application, a symmetric operator like
  $\bullet$ will distribute over $+$ and so $\ClosedDay{\bullet}$ will
  be constructed by \Cref{prop:closed-monoid-distrib}; and
  a non-symmetric operator like $\lhd$ will be duoidal under $+$ and
  so $\ClosedDay{\lhd}$ will be constructed by \Cref{prop:closed-monoid-duoidal}.
\end{remark}

\begin{remark}
  If we have two pomonoids on $(A, \leq)$ that share a unit, then the
  two constructions of units in \Cref{prop:closed-monoid-distrib,prop:closed-monoid-duoidal} will yield the same element of
  $\ClosedLowerSet{A}$.
\end{remark}

\begin{proposition}
  If $(\bullet, i)$ is duoidal over $(\lhd, j)$ on $(A, \leq)$, and
  $(\bullet, i)$ distributes over $+$ (as in \Cref{prop:closed-monoid-distrib}) and $+$ is duoidal over
  $(\lhd, j)$ (as in \Cref{prop:closed-monoid-duoidal}),
  then $(\ClosedDay{\bullet}, \ClosedDay{i})$ is duoidal over
  $(\ClosedDay{\lhd}, \ClosedDay{j})$ on
  $(\ClosedLowerSet{A}, \subseteq)$.
\end{proposition}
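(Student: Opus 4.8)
The plan is to lift the duoidal inequalities already established for the Day pomonoids on plain lower sets (\Cref{prop:lower-set-duoidal}) up to the $+$-closed lower sets, using the reflection between $\LowerSet{A}$ and $\ClosedLowerSet{A}$ recorded in the closure-operator proposition above. Three facts drive the argument. First, $\alpha$ is left adjoint to $U$: for any lower set $H$ and any $+$-closed $G$ we have $\alpha H \subseteq G$ iff $H \subseteq UG$ (this follows from $\alpha(UF)=F$, $F \subseteq U(\alpha F)$, and monotonicity). Second, because $\ClosedDay{\bullet}$ is built by closing the Day product (\Cref{prop:closed-monoid-distrib}), $U$ is \emph{lax} for $\bullet$, i.e.\ $UF \Day{\bullet} UG \subseteq U(F \ClosedDay{\bullet} G)$, directly from $H \subseteq U(\alpha H)$. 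Third, because $\ClosedDay{\lhd}$ needs no closure (\Cref{prop:closed-monoid-duoidal}), $U$ is \emph{strict} for $\lhd$, i.e.\ $U(F \ClosedDay{\lhd} G) = UF \Day{\lhd} UG$.

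For the principal inequality $(W \ClosedDay{\lhd} X) \ClosedDay{\bullet} (Y \ClosedDay{\lhd} Z) \subseteq (W \ClosedDay{\bullet} Y) \ClosedDay{\lhd} (X \ClosedDay{\bullet} Z)$, I would first rewrite the left-hand side, using strictness of $U$ for $\lhd$, as $\alpha\big((UW \Day{\lhd} UX) \Day{\bullet} (UY \Day{\lhd} UZ)\big)$. Since the right-hand side is $+$-closed, the adjunction reduces the goal to an inclusion of plain lower sets, namely $(UW \Day{\lhd} UX) \Day{\bullet} (UY \Day{\lhd} UZ) \subseteq U\big((W \ClosedDay{\bullet} Y) \ClosedDay{\lhd} (X \ClosedDay{\bullet} Z)\big)$. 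Expanding the right-hand side by strictness for $\lhd$ and then bounding it from below with laxity for $\bullet$ and monotonicity of $\Day{\lhd}$, it suffices to prove $(UW \Day{\lhd} UX) \Day{\bullet} (UY \Day{\lhd} UZ) \subseteq (UW \Day{\bullet} UY) \Day{\lhd} (UX \Day{\bullet} UZ)$, which is exactly the first duoidal inequality for the Day pomonoids from \Cref{prop:lower-set-duoidal}.

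The remaining three conditions follow the same template and are lighter because the units are concrete. For $\ClosedDay{j} \ClosedDay{\bullet} \ClosedDay{j} \subseteq \ClosedDay{j}$, note $U\ClosedDay{j} = \eta(j)$ and $\eta(j) \Day{\bullet} \eta(j) = \eta(j \bullet j)$ (\Cref{prop:day-construction}), so the left-hand side is $\alpha(\eta(j \bullet j)) \subseteq \alpha(\eta(j)) = \ClosedDay{j}$, using $j \bullet j \leq j$. For $\ClosedDay{i} \subseteq \ClosedDay{i} \ClosedDay{\lhd} \ClosedDay{i}$, the adjunction reduces the goal to $\eta(i) \subseteq U(\ClosedDay{i} \ClosedDay{\lhd} \ClosedDay{i}) = U\ClosedDay{i} \Day{\lhd} U\ClosedDay{i}$; since $\eta(i) \subseteq U\ClosedDay{i}$ and $\eta(i \lhd i) = \eta(i) \Day{\lhd} \eta(i)$, monotonicity together with $i \leq i \lhd i$ closes it. For $\ClosedDay{i} \subseteq \ClosedDay{j}$, the adjunction reduces to $\eta(i) \subseteq U\ClosedDay{j} = \eta(j)$, which is just $i \leq j$. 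I expect the main obstacle to be bookkeeping the asymmetry in the two constructions: $\ClosedDay{\bullet}$ carries an explicit closure $\alpha$ while $\ClosedDay{\lhd}$ does not, so in the nested condition~(1) the closures sit in different positions on the two sides. The adjunction $\alpha \dashv U$ and the lax-versus-strict behaviour of $U$ are precisely what let these closures be absorbed in the correct direction, reducing every case to an already-proven inequality on lower sets.
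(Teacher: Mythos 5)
Your proposal is correct and takes essentially the same approach as the paper's (very terse) proof: the duoidal inequalities are transported from \Cref{prop:lower-set-duoidal} to $+$-closed lower sets via the reflection properties of $\alpha$ and $U$ (your adjunction $\alpha \dashv U$, laxity of $U$ for $\ClosedDay{\bullet}$, and strictness for $\ClosedDay{\lhd}$ are exactly the ``properties of $\alpha$ and $U$'' the paper invokes). The only cosmetic difference is in the condition $\ClosedDay{i} \subseteq \ClosedDay{j}$: the paper explicitly collapses $+$-contexts of $j$s using $j + j \leq j$, whereas you route that same dependence through the adjunction, where $j + j \leq j$ is what makes $\eta(j)$ a legitimate $+$-closed lower set to begin with.
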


\begin{proof}
  The duoidal relationship established in \Cref{prop:lower-set-duoidal} carries over thanks to the properties
  of $\alpha$ and $U$. The fact that
  $\ClosedDay{i} \subseteq \ClosedDay{j}$ relies on the condition
  $j + j \leq j$ to collapse $+$-contexts of $j$s.
\end{proof}

\subsubsection{Chu Construction}
\label{sec:chu}

To construct suitable MAV-algebras, we use the partial order version
of the Chu construction \cite[Appendix by Po-Hsiang
  Chu]{Barr_1979}. The Chu construction builds $*$-autonomous categories
from symmetric monoidal closed categories with pullbacks. In the
partial order case, the requirement for pullbacks simplifies to binary
meets. For this section, we let
$(A, \leq, \bullet, i, \rightblackspoon)$ be a partial order with a
residuated pomonoid structure and all binary meets.

\begin{definition}\label[defn]{defn:chu}
  Let $k$ be an element of $A$. $\Chu(A, k)$ is the partial order with
  elements pairs $(a^+, a^-)$ such that $a^+ \bullet a^- \leq k$, with
  ordering $(a^+,a^-) \sqsubseteq (b^+, b^-)$ when $a^+ \leq b^+$ and
  $b^- \leq a^-$. There is a monotone embedding function
  $\ChuEmbed : A \to \Chu(A, k)$ defined as
  $\ChuEmbed(x) = (x, x \rightblackspoon k)$.
\end{definition}

\begin{proposition}
  $(\Chu(A, k), \sqsubseteq)$ has a $*$-autonomous structure defined
  as:
  \begin{displaymath}
    (a^+, a^-) \otimes (b^+, b^-) = (a^+ \bullet b^+, (b^+ \rightblackspoon a^-) \land (a^+ \rightblackspoon b^-))
    \qquad
    I = (i, k)
    \qquad
    \lnot(a^+,a^-) = (a^-, a^+)
  \end{displaymath}
  Moreover,
  $\ChuEmbed(x \bullet y) = \ChuEmbed(x) \otimes \ChuEmbed(y)$ and
  $\ChuEmbed(i) = I$.
\end{proposition}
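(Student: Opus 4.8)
The plan is to verify each piece of the claimed structure in turn, relying throughout on the residuation adjunction $x \bullet y \leq z \iff x \leq y \rightblackspoon z$ (\Cref{defn:residual}) together with its standard consequences: that $\rightblackspoon$ preserves binary meets in its second argument, that currying holds in the form $(x \bullet y) \rightblackspoon z = x \rightblackspoon (y \rightblackspoon z)$, the evaluation inequality $y \bullet (y \rightblackspoon z) \leq z$, and that $i \rightblackspoon z = z$. These follow from the adjunction and are the only facts about $\rightblackspoon$ needed.

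First I would check that the operations send Chu pairs to Chu pairs. For $\otimes$, the meet lies below $b^+ \rightblackspoon a^-$, so monotonicity of $\bullet$ gives $(a^+ \bullet b^+) \bullet \bigl((b^+ \rightblackspoon a^-) \land (a^+ \rightblackspoon b^-)\bigr) \leq a^+ \bullet \bigl(b^+ \bullet (b^+ \rightblackspoon a^-)\bigr) \leq a^+ \bullet a^- \leq k$, using the evaluation inequality. For $I = (i,k)$ the condition $i \bullet k = k \leq k$ is immediate, and $\lnot$ preserves the Chu condition by commutativity of $\bullet$. Anti-monotonicity and involutivity of $\lnot$ are immediate from its definition as a swap, and commutativity of $\otimes$ follows from commutativity of $\bullet$ and of $\land$.

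Next I would establish the pomonoid laws for $(\otimes, I)$. Monotonicity holds componentwise: the positive part from monotonicity of $\bullet$, the negative part because $\rightblackspoon$ is monotone in its second argument and antitone in its first, which is exactly the variance the negative coordinate of $\sqsubseteq$ demands. For the left unit, $(i,k) \otimes (b^+, b^-)$ has positive part $i \bullet b^+ = b^+$ and negative part $(b^+ \rightblackspoon k) \land (i \rightblackspoon b^-) = (b^+ \rightblackspoon k) \land b^-$; since $(b^+,b^-)$ is a Chu pair we have $b^- \leq b^+ \rightblackspoon k$, so this meet is $b^-$, and the right unit follows by commutativity.

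The main obstacle is associativity on the negative component, the positive component being just associativity of $\bullet$. I would compute both $\bigl((a \otimes b) \otimes c\bigr)^-$ and $\bigl(a \otimes (b \otimes c)\bigr)^-$, distributing $\rightblackspoon$ over the meets and applying currying to flatten each nested residual; both sides then reduce to the symmetric three-way meet $\bigl((b^+ \bullet c^+) \rightblackspoon a^-\bigr) \land \bigl((a^+ \bullet c^+) \rightblackspoon b^-\bigr) \land \bigl((a^+ \bullet b^+) \rightblackspoon c^-\bigr)$, where commutativity of $\bullet$ is needed to align the factors. The defining $*$-autonomous law $x \otimes y \sqsubseteq \lnot z \iff x \sqsubseteq \lnot(y \otimes z)$ then needs no separate argument: unfolding each of its two coordinate inequalities turns it, via the adjunction applied three times, into an evident equivalence. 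Finally the embedding claim is a short calculation: $\ChuEmbed(x) \otimes \ChuEmbed(y)$ has negative part $\bigl(y \rightblackspoon (x \rightblackspoon k)\bigr) \land \bigl(x \rightblackspoon (y \rightblackspoon k)\bigr)$, and currying together with commutativity collapse both factors to $(x \bullet y) \rightblackspoon k$, yielding $\ChuEmbed(x \bullet y)$, while $\ChuEmbed(i) = (i, i \rightblackspoon k) = (i,k) = I$.
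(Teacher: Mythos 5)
Your proof is correct. The paper offers no written proof of this proposition---it invokes the standard partial-order Chu construction (citing Barr's appendix by Po-Hsiang Chu) and discharges the details in the Agda mechanisation---and your componentwise verification (well-definedness of $\otimes$ via the evaluation inequality, the unit law via the Chu condition $b^- \leq b^+ \rightblackspoon k$, associativity via the symmetric three-way meet, the $*$-autonomous law by unfolding both sides to the same three inequalities, and the embedding claim by currying) is exactly the standard argument that the mechanisation carries out.
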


\begin{remark}
  If we choose $k = i$, then $(\Chu(A, i), \sqsubseteq)$ has
  $*$-autonomous structure that satisfies \emph{mix}.
\end{remark}

\begin{proposition}\label[prop]{prop:chu-meets}
  If $A$ has binary joins, then $(\Chu(A, k), \sqsubseteq)$ has
  binary meets, given by
  $(a^+,a^-) \with (b^+,b^-) = (a^+ \land b^+, a^- \lor b^-)$.
\end{proposition}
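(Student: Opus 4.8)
The plan is to verify three things: that the proposed pair $(a^+ \land b^+, a^- \lor b^-)$ is a legitimate element of $\Chu(A,k)$ (i.e.\ satisfies the defining Chu inequality), that it is a lower bound of $(a^+,a^-)$ and $(b^+,b^-)$ in the $\sqsubseteq$ order, and that it is the greatest such lower bound. The last two are immediate from the definition of $\sqsubseteq$ and the universal properties of $\land$ and $\lor$ in $A$; the only point requiring thought is well-definedness.

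I expect the crux to be showing $(a^+ \land b^+) \bullet (a^- \lor b^-) \leq k$, given the Chu conditions $a^+ \bullet a^- \leq k$ and $b^+ \bullet b^- \leq k$. The key observation is that, because $(\bullet, i)$ is residuated, the operation $x \bullet (-)$ has a right adjoint $x \rightblackspoon (-)$ and therefore preserves all existing joins; in particular $\bullet$ distributes over the binary join $\lor$ that $A$ is assumed to carry. Using this, $(a^+ \land b^+) \bullet (a^- \lor b^-) = \bigl((a^+ \land b^+) \bullet a^-\bigr) \lor \bigl((a^+ \land b^+) \bullet b^-\bigr)$. Monotonicity of $\bullet$ together with $a^+ \land b^+ \leq a^+$ and $a^+ \land b^+ \leq b^+$ then gives $(a^+ \land b^+) \bullet a^- \leq a^+ \bullet a^- \leq k$ and $(a^+ \land b^+) \bullet b^- \leq b^+ \bullet b^- \leq k$, so $k$ bounds both joinands and hence bounds their join.

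For the lower-bound property I unfold $\sqsubseteq$: from $a^+ \land b^+ \leq a^+$ and $a^- \leq a^- \lor b^-$ I get $(a^+\land b^+, a^-\lor b^-) \sqsubseteq (a^+,a^-)$, and symmetrically for $(b^+,b^-)$. For maximality, suppose $(c^+,c^-)$ lies below both elements; then $c^+ \leq a^+$ and $c^+ \leq b^+$ force $c^+ \leq a^+ \land b^+$ by the universal property of the meet in $A$, while dually $a^- \leq c^-$ and $b^- \leq c^-$ force $a^- \lor b^- \leq c^-$ by the universal property of the join. These two inequalities are precisely $(c^+,c^-) \sqsubseteq (a^+\land b^+, a^-\lor b^-)$, so the candidate pair is the greatest lower bound. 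Note that the argument uses nothing of the $*$-autonomous structure beyond the residuation needed for distributivity: it is essentially the standard way in which the Chu construction turns meets in the second coordinate into joins, dual to how $\land$ in $A$ supplies the first coordinate.
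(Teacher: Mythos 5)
Your proof is correct and matches the intended argument: the paper states this proposition without an explicit proof (it is one of the routine Chu-construction facts, verified in the Agda formalisation), and your verification---well-definedness of $(a^+ \land b^+,\, a^- \lor b^-)$ via residuation (so that $\bullet$ preserves the binary join) together with monotonicity, then the evident lower-bound and greatest-lower-bound checks from the definition of $\sqsubseteq$---is exactly the standard one. Note only that the distributivity you invoke could equally be bypassed by applying residuation directly (reduce $(a^+ \land b^+) \bullet (a^- \lor b^-) \leq k$ to $a^- \lor b^- \leq (a^+ \land b^+) \rightblackspoon k$ and check each joinand), but this is the same idea in different packaging.
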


\begin{remark}
  Since $(\Chu(A, k), \sqsubseteq, \otimes, I, \lnot)$ is a
  $*$-autonomous partial order, then \Cref{prop:chu-meets}
  also means that $\Chu(A, k)$ has all binary joins, with
  $(a^+, a^-) \oplus (b^+, b^-) = (a^+ \lor b^+, a^- \land b^-)$.
\end{remark}

We now turn to the self-dual duoidal structure required to interpret
the $\vSeq$ connective. First we transfer pomonoids from $(A, \leq)$
to self-dual pomonoids on $(\Chu(A, k), \sqsubseteq)$ provided they
interact well with $k$:
\begin{proposition}\label[prop]{prop:chu-self-dual}
  Let $(\lhd, j)$ be a pomonoid on $(A, \leq)$ such that
  $(\bullet, i)$ is duoidal over $k \lhd k \leq k$ and $j \leq k$,
  then $x \lhd y = (x^+ \lhd y^+, x^- \lhd y^-)$ and $J = (j, j)$ form
  a self-dual pomonoid on $\Chu(A, k)$.
\end{proposition}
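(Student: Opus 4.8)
The plan is to verify that the component-wise definitions of $\lhd$ and $J$ satisfy all the properties of a self-dual pomonoid on $\Chu(A,k)$. Because both the binary operation $(x^+,x^-) \lhd (y^+,y^-) = (x^+ \lhd y^+,\, x^- \lhd y^-)$ and the unit $J = (j,j)$ are defined purely coordinate-by-coordinate, almost every axiom reduces directly to the corresponding fact about the pomonoid $(\lhd, j)$ on $(A, \leq)$. The only substantive content is \emph{well-definedness}: checking that the resulting pairs are genuine elements of $\Chu(A,k)$, i.e.\ that their two coordinates multiply (under $\bullet$) below $k$.

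The main obstacle---and the step where the hypotheses do their work---is well-definedness of the binary operation. Given $x = (x^+, x^-)$ and $y = (y^+, y^-)$ with $x^+ \bullet x^- \leq k$ and $y^+ \bullet y^- \leq k$, I need $(x^+ \lhd y^+) \bullet (x^- \lhd y^-) \leq k$. First I would apply the duoidal interchange inequality (condition 1 of \Cref{defn:duoidal}) to obtain $(x^+ \lhd y^+) \bullet (x^- \lhd y^-) \leq (x^+ \bullet x^-) \lhd (y^+ \bullet y^-)$; then monotonicity of $\lhd$ together with $x^+ \bullet x^- \leq k$ and $y^+ \bullet y^- \leq k$ bounds this above by $k \lhd k$, which is at most $k$ by hypothesis. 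For the unit, I must check $j \bullet j \leq k$; this follows by composing $j \bullet j \leq j$ (condition 2 of \Cref{defn:duoidal}) with the hypothesis $j \leq k$.

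With well-definedness secured, the remaining checks are routine coordinate-wise calculations. Monotonicity with respect to $\sqsubseteq$ holds because $\lhd$ is monotone on $A$ in both arguments: in the positive coordinate the order is preserved directly, while in the negative coordinate the reversal built into $\sqsubseteq$ aligns exactly with the same monotonicity applied to the reversed premises. Associativity and the left and right unit laws transfer immediately from the pomonoid laws of $(\lhd, j)$, applied independently in each coordinate. Finally, self-duality is the identity $\lnot(x \lhd y) = (x^- \lhd y^-,\, x^+ \lhd y^+) = (\lnot x) \lhd (\lnot y)$, which holds on the nose from the definition $\lnot(a^+, a^-) = (a^-, a^+)$ and the symmetry of the coordinate-wise action.
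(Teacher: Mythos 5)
Your proof is correct and follows essentially the same route as the paper's: well-definedness of $\lhd$ via the duoidal interchange law, monotonicity, and $k \lhd k \leq k$; well-definedness of $J$ via $j \bullet j \leq j \leq k$; and coordinate-wise transfer of the pomonoid laws and self-duality. You spell out the routine transfer steps in slightly more detail than the paper (which simply says the laws ``transfer directly''), but the substance is identical.
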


\begin{proof}
  $x\lhd y$ is well-defined because
  $(x^+ \lhd y^+) \bullet (x^- \lhd y^-) \leq (x^+ \bullet x^-) \lhd
    (y^+ \bullet y^-) \leq k \lhd k \leq k$. $J$ is well defined because
  $j \bullet j \leq j \leq k$. The pomonoid laws all transfer directly.
\end{proof}

\begin{remark}
  When $k = j$, the two conditions in the proposition are
  automatically satisfied. Moreover, if $k = i = j$, then not only
  does the $*$-autonomous structure satisfy \emph{mix}, but we also
  have $I =J$.
\end{remark}

Finally, we need to show that if $(\bullet, j)$ is duoidal over
$(\lhd, j)$, then their Chu counterparts are in the same
relationship. Due to the use of residuals in the definition of
$\otimes$, we need the following fact about duoidal residuated
pomonoids:

\begin{lemma}\label[lem]{lem:duoidal-residual}
  If $(\bullet, j)$ is duoidal over $(\lhd, j)$ in a partial order
  $(A, \leq)$ and $(\bullet, j)$ has a residual $\rightblackspoon$,
  then
  $(w \rightblackspoon x) \lhd (y \rightblackspoon z) \leq (w \lhd y)
    \rightblackspoon (x \lhd z)$.
\end{lemma}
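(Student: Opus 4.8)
The plan is to use the residuation adjunction of \Cref{defn:residual} to turn the goal into an inequality about $\bullet$, and then discharge that inequality using the first duoidal law of \Cref{defn:duoidal} together with the evaluation (counit) property of the residual. Note that residuals are defined only for commutative pomonoids, so throughout I may freely use $a \bullet b = b \bullet a$.

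First I would record the evaluation law: since $a \bullet b \leq c$ iff $a \leq b \rightblackspoon c$, instantiating $a = b \rightblackspoon c$ at the identity $b \rightblackspoon c \leq b \rightblackspoon c$ yields $(b \rightblackspoon c) \bullet b \leq c$. Next, applying the adjunction to the goal with $b = w \lhd y$ and $c = x \lhd z$, it suffices to prove
\[
  \bigl((w \rightblackspoon x) \lhd (y \rightblackspoon z)\bigr) \bullet (w \lhd y) \leq x \lhd z.
\]

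For this I would apply the first duoidal inequality of \Cref{defn:duoidal}, reading the left-hand side as a $\bullet$-product of the two $\lhd$-terms $(w \rightblackspoon x) \lhd (y \rightblackspoon z)$ and $w \lhd y$, which gives
\[
  \bigl((w \rightblackspoon x) \lhd (y \rightblackspoon z)\bigr) \bullet (w \lhd y) \leq \bigl((w \rightblackspoon x) \bullet w\bigr) \lhd \bigl((y \rightblackspoon z) \bullet y\bigr).
\]
The evaluation law then bounds each component, $(w \rightblackspoon x) \bullet w \leq x$ and $(y \rightblackspoon z) \bullet y \leq z$, so monotonicity of $\lhd$ in both arguments yields the final bound $x \lhd z$. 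Chaining the three steps—adjunction, duoidality, evaluation plus monotonicity—establishes the claim.

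The only subtlety, rather than a genuine obstacle, is the bookkeeping around commutativity and the matching of factors: one must place $w \lhd y$ on the correct side of the product so that the duoidal pattern $(A \lhd B) \bullet (C \lhd D) \leq (A \bullet C) \lhd (B \bullet D)$ applies with $A = w \rightblackspoon x$, $B = y \rightblackspoon z$, $C = w$, $D = y$, and then use commutativity of $\bullet$ to align the terms $(w \rightblackspoon x) \bullet w$ and $(y \rightblackspoon z) \bullet y$ with the evaluation law $(b \rightblackspoon c) \bullet b \leq c$. Everything else is a direct application of the definitions.
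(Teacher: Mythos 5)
Your proof is correct: transposing the goal across the residuation adjunction, applying the first duoidal inequality, and finishing with the evaluation law $(b \rightblackspoon c) \bullet b \leq c$ together with monotonicity of $\lhd$ is exactly the intended argument (the paper states this lemma without an inline proof, deferring to its Agda mechanisation, which proceeds this same way). One minor remark: the commutativity bookkeeping you flag is not actually needed---with the paper's convention that $x \bullet y \leq z$ iff $x \leq y \rightblackspoon z$, the duoidal step already produces $\bigl((w \rightblackspoon x) \bullet w\bigr) \lhd \bigl((y \rightblackspoon z) \bullet y\bigr)$ in precisely the form the evaluation law consumes.
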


\begin{remark}
  \Cref{lem:duoidal-residual} is in some sense the
  ``intuitionistic'' version of the duoidal relationship for $\parr$
  arising as the dual of that for $\otimes$ in a $*$-autonomous
  partial order, as we saw in \Cref{prop:mav-algebra-consequences}.
\end{remark}

\begin{proposition}\label[prop]{prop:chu-monoid-duoidal}
  If $(\bullet, i)$ is duoidal over $(\lhd, j)$ on $(A, \leq)$, and
  $(\lhd, j)$ satisfies the conditions of \Cref{prop:chu-self-dual}, then $(\otimes, I)$ and $(\lhd, J)$ are in a
  duoidal relationship on $\Chu(A, k)$.
\end{proposition}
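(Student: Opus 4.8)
The plan is to verify directly the four defining inequalities of \Cref{defn:duoidal} for $(\otimes, I)$ over $(\lhd, J)$ in $\Chu(A, k)$, working componentwise. I write $W = (w^+, w^-)$ and similarly for $X, Y, Z$, and recall that $\sqsubseteq$ is covariant in the positive component and contravariant in the negative one; hence each inequality in $\Chu(A, k)$ splits into a forward inequality on positive components and a reversed inequality on negative components, to be discharged in the base order $(A, \leq)$.

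The three ``unit'' conditions are routine componentwise checks. For $I \sqsubseteq J$, that is $(i, k) \sqsubseteq (j, j)$, I need $i \leq j$ (condition~(4) of the duoidal relationship in $A$) and $j \leq k$ (a hypothesis of \Cref{prop:chu-self-dual}). For $I \sqsubseteq I \lhd I$, I compute $I \lhd I = (i \lhd i,\, k \lhd k)$ and need $i \leq i \lhd i$ (condition~(3) in $A$) and $k \lhd k \leq k$ (again \Cref{prop:chu-self-dual}). For $J \otimes J \sqsubseteq J$, I compute $J \otimes J = (j \bullet j,\, j \rightblackspoon j)$ and need $j \bullet j \leq j$ (condition~(2) in $A$) together with $j \leq j \rightblackspoon j$, which is equivalent to $j \bullet j \leq j$ by residuation.

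The substance is the interchange inequality $(W \lhd X) \otimes (Y \lhd Z) \sqsubseteq (W \otimes Y) \lhd (X \otimes Z)$. Unfolding the definitions of $\otimes$ and $\lhd$, the positive-component obligation is exactly $(w^+ \lhd x^+) \bullet (y^+ \lhd z^+) \leq (w^+ \bullet y^+) \lhd (x^+ \bullet z^+)$, i.e.\ condition~(1) of the duoidal relationship in $A$. The negative-component obligation, with the inequality reversed, is the hard part:
\[
\big((y^+ \rightblackspoon w^-) \land (w^+ \rightblackspoon y^-)\big) \lhd \big((z^+ \rightblackspoon x^-) \land (x^+ \rightblackspoon z^-)\big) \leq \big((y^+ \lhd z^+) \rightblackspoon (w^- \lhd x^-)\big) \land \big((w^+ \lhd x^+) \rightblackspoon (y^- \lhd z^-)\big).
\]
As the right-hand side is a meet, it suffices to prove the two inequalities obtained by projecting onto each conjunct. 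For the first conjunct I retain only the first component of each meet on the left (by monotonicity of $\lhd$), reducing the goal to $(y^+ \rightblackspoon w^-) \lhd (z^+ \rightblackspoon x^-) \leq (y^+ \lhd z^+) \rightblackspoon (w^- \lhd x^-)$, which is precisely \Cref{lem:duoidal-residual}. The second conjunct is symmetric: retain the second components and invoke \Cref{lem:duoidal-residual} again.

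I expect this negative component to be the main obstacle, as it is the only place where the residuals built into the $\Chu$ tensor must be commuted past $\lhd$; discharging it is exactly the purpose of \Cref{lem:duoidal-residual}. One subtlety worth noting: \Cref{lem:duoidal-residual} is stated with a shared unit, whereas here $(\bullet, i)$ is duoidal over $(\lhd, j)$ with possibly $i \neq j$; however its proof uses only the interchange law~(1) together with residuation, so it applies unchanged. Everything else --- the componentwise unfolding and the monotonicity steps --- is routine bookkeeping.
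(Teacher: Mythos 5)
Your proof is correct and follows essentially the same route as the paper's: the positive components are discharged by the duoidal interchange law in $A$, and the negative component by \Cref{lem:duoidal-residual} combined with the meet structure --- your project-each-conjunct-and-use-monotonicity argument is precisely the paper's appeal to the fact that meets are always duoidal. Your side remark that \Cref{lem:duoidal-residual}, though stated with a shared unit, uses only the interchange inequality and residuation (not the unit laws) is accurate and resolves the only mismatch between the lemma as stated and its application here; the explicit unit-condition checks, which the paper leaves implicit, are also correct.
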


\begin{proof}
  For the positive half of the Chu construction, this is a direct
  consequence of the duoidal relationship. For the negative half, we
  use \Cref{lem:duoidal-residual} and the fact that meets are
  always duoidal.
\end{proof}

\subsubsection{Construction of MAV-algebras from MAV-frames}
\label{sec:algebra-from-frame}

The propositions in the preceeding three sections together prove that
every MAV-frame yields an MAV-algebra:

\begin{theorem}\label[thm]{thm:algebra-from-frame}
  If $(F, \leq, \parr, \lhd, i, +)$ is an MAV-frame, then
  $(\Chu(\ClosedLowerSet{F}, \ClosedDay{i}), \sqsubseteq)$ has the
  structure of an MAV-algebra.
\end{theorem}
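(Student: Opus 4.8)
The plan is to assemble the theorem from the chain of constructions developed in the preceding three subsections, verifying at each stage that the required hypotheses are met so that the next construction applies. Starting from an MAV-frame $(F, \leq, \parr, \lhd, i, +)$, I would first pass to $\ClosedLowerSet{F}$, the $+$-closed lower sets. By \Cref{prop:closed-monoid-distrib} applied to $\parr$ (which distributes over $+$, by condition (2) of the MAV-frame definition), I obtain a residuated commutative pomonoid $(\ClosedDay{\parr}, \ClosedDay{i})$ on $\ClosedLowerSet{F}$. By \Cref{prop:closed-monoid-duoidal} applied to $\lhd$ (which is duoidal under $+$, by condition (3), with $i + i \leq i$ coming from condition (4)), I obtain a pomonoid $(\ClosedDay{\lhd}, \ClosedDay{i})$. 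Condition (1) of the MAV-frame gives the duoidal relationship of $\parr$ over $\lhd$ on $F$, which lifts to $\ClosedLowerSet{F}$ by \Cref{prop:lower-set-duoidal} and then transfers to the $+$-closed setting by the duoidal-preservation proposition at the end of \Cref{sec:closed-lower-sets}. Note that since both pomonoids share the unit $i$, the two unit constructions coincide, so I write $\ClosedDay{i}$ unambiguously. At this point $\ClosedLowerSet{F}$ is a residuated commutative pomonoid with all binary meets and joins, carrying a second pomonoid $\ClosedDay{\lhd}$ that is duoidal under $\ClosedDay{\parr}$.

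Next I would apply the Chu construction with $k = \ClosedDay{i}$ to produce $\Chu(\ClosedLowerSet{F}, \ClosedDay{i})$. The residuated commutative pomonoid $\ClosedDay{\parr}$ and binary meets on $\ClosedLowerSet{F}$ are exactly the data the Chu construction requires, so the $*$-autonomous structure is furnished by the proposition following \Cref{defn:chu}; choosing $k = \ClosedDay{i}$ (the unit of $\parr$) makes the structure satisfy \emph{mix}, discharging condition (1) of \Cref{defn:mav-algebra}. The binary meets of \Cref{defn:mav-algebra} come from \Cref{prop:chu-meets}, which applies because $\ClosedLowerSet{F}$ has binary joins; this is condition (5), and by the subsequent remark we also recover binary joins on the Chu object.

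For the $\vSeq$ structure I would lift $\ClosedDay{\lhd}$ into the Chu construction via \Cref{prop:chu-self-dual}. Taking $k = j = \ClosedDay{i}$, the two side conditions ($k \lhd k \leq k$ and $j \leq k$) are automatically satisfied, as noted in the remark after that proposition; this yields a self-dual pomonoid on $\Chu(\ClosedLowerSet{F}, \ClosedDay{i})$, giving conditions (2) and (3) of \Cref{defn:mav-algebra}, and moreover $I = J$. Finally, condition (4)—that $\otimes$ is duoidal over $\lhd$—follows from \Cref{prop:chu-monoid-duoidal}, whose hypotheses are precisely that $\ClosedDay{\parr}$ is duoidal over $\ClosedDay{\lhd}$ (established in the first stage) together with the conditions of \Cref{prop:chu-self-dual} (just verified). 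This exhausts the five clauses of \Cref{defn:mav-algebra}.

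The proof is therefore a matter of bookkeeping: the genuine mathematical content lives in the individual propositions. The step I expect to require the most care is confirming that the duoidal relationship survives the transition from plain lower sets to $+$-closed lower sets, since this is where the interaction between the Day construction and the closure operator $\alpha$ is most delicate, and where one must check that conditions (1)--(4) of the MAV-frame definition are each consumed exactly once and in the right combination. I would write the proof as an explicit enumeration matching the five clauses of \Cref{defn:mav-algebra} against the propositions invoked, making the unit coincidences ($\ClosedDay{i} = k = j$ and hence $I = J$) explicit to avoid ambiguity between the two distinct constructions sharing the $\ClosedDay{}$ notation.
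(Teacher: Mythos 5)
Your proposal is correct and follows exactly the route the paper intends: the paper's entire ``proof'' of \Cref{thm:algebra-from-frame} is the single sentence that the propositions of the preceding three sections assemble into the result, and your bookkeeping (\Cref{prop:closed-monoid-distrib} for $\parr$ via frame condition (2), \Cref{prop:closed-monoid-duoidal} for $\lhd$ via conditions (3)--(4), the duoidal-preservation proposition for condition (1), then the Chu construction with $k = \ClosedDay{i}$ together with \Cref{prop:chu-meets}, \Cref{prop:chu-self-dual}, and \Cref{prop:chu-monoid-duoidal}) is precisely the intended instantiation, including the unit coincidences that make \emph{mix} and $I = J$ hold.
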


With this theorem we can define a notion of validity in MAV in terms
of truth in all MAV-frame generated algebras. By \Cref{thm:soundness},
MAV is sound for this notion of validity:

\begin{theorem}\label[thm]{thm:mavframe-soundness}
  MAV is sound for the MAV-frame semantics: if $P \vInferFrom* \vUnit$
  then for all MAV-frames $F$, $I \sqsubseteq \sem{P}$ in
  $(\Chu(\ClosedLowerSet{F}, \ClosedDay{i}), \sqsubseteq)$.
\end{theorem}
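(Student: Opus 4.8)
The plan is to obtain this statement as an immediate corollary of the two results established just above, \Cref{thm:algebra-from-frame} and \Cref{thm:soundness}. The entire content of the theorem is that the general algebraic soundness of MAV specialises to the particular class of MAV-algebras arising from frames via the Chu and $+$-closed lower set constructions.

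First I would fix an arbitrary MAV-frame $(F, \leq, \parr, \lhd, i, +)$ together with a valuation of the atoms into the associated algebra. By \Cref{thm:algebra-from-frame}, the structure $(\Chu(\ClosedLowerSet{F}, \ClosedDay{i}), \sqsubseteq)$ is an MAV-algebra: it carries the $*$-autonomous structure with \emph{mix}, the self-dual pomonoid interpreting $\vSeq$, and the binary meets demanded by \Cref{defn:mav-algebra}. This is exactly the data needed to run the general soundness argument.

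Second I would apply \Cref{thm:soundness} to this particular MAV-algebra. Since $P \vInferFrom* \vUnit$ holds by hypothesis, the soundness theorem directly yields $I \sqsubseteq \sem{P}$, with $\sqsubseteq$ the Chu order and $I$ its unit; no further calculation is required.

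The only point demanding care is the treatment of the valuation: the interpretation $\sem{P}$ depends on a chosen map from atoms into the Chu algebra, so the statement is to be read as holding uniformly for every such valuation. This causes no difficulty, since \Cref{thm:soundness} is itself proved uniformly in the valuation. There is therefore no real obstacle; all the genuine work has already been discharged in establishing the three-step construction underlying \Cref{thm:algebra-from-frame} and the rule-by-rule verification underlying \Cref{thm:soundness}, and the present theorem simply packages those two facts together.
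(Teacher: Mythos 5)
Your proposal is correct and matches the paper's own treatment exactly: the paper presents \Cref{thm:mavframe-soundness} as an immediate consequence of \Cref{thm:algebra-from-frame} (every MAV-frame yields an MAV-algebra via the $+$-closed lower set and Chu constructions) combined with \Cref{thm:soundness} (soundness of the interpretation in any MAV-algebra, uniformly in the valuation). No further argument is given or needed.
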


\section{Semantic Cut-Elimination and Proof Normalisation}
\label{sec:mav-cut-elimination}

Let $\Chu(\ClosedLowerSet{\NMAV}, \ClosedDay{\vUnit})$ be the MAV-algebra
constructed (\Cref{thm:algebra-from-frame}) from the normal
proof MAV-frame (\Cref{prop:nmav-frame}), where elements
are positive/negative pairs of $+$-closed lower sets of structures. We
define the valuation of atoms as
$V(\va) = \ChuEmbed(\ClosedLowerEmbed(\vDual\va))$. By
\Cref{thm:soundness}, we have an interpretation of MAV structures
$\sem{\vP}$ such that if $\vP\vInferFrom*\vUnit$, then
$I \sqsubseteq \sem{\vP}$. We now prove our main proposition about this
interpretation in the MAV-algebra derived from the MAV-frame of normal
proofs \NMAV{} that will allow us to derive the admissibility of all
the non-normal proof rules of MAV.

\begin{proposition}\label[prop]{prop:embedding-sem}
  For all structures $\vP$, $\sem{\vP} \sqsubseteq \lnot (\ChuEmbed(\ClosedLowerEmbed(P)))$.
\end{proposition}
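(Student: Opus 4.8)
The plan is to prove the statement by structural induction on $\vP$, after first unfolding it into a more tractable pair of inequalities. Writing $\sem{\vP} = (\sem{\vP}^+, \sem{\vP}^-)$ for the two components of the Chu element and $k = \ClosedDay{\vUnit}$ for the dualising element, the definitions of $\sqsubseteq$, $\lnot$, and $\ChuEmbed$ turn the goal $\sem{\vP}\sqsubseteq\lnot(\ChuEmbed(\ClosedLowerEmbed(\vP)))$ into the conjunction of two inequalities in $\ClosedLowerSet{\NMAV}$: namely (1) $\sem{\vP}^+ \bullet \ClosedLowerEmbed(\vP) \leq k$ (equivalently, by residuation, $\sem{\vP}^+ \leq \ClosedLowerEmbed(\vP) \rightblackspoon k$); and (2) $\ClosedLowerEmbed(\vP) \leq \sem{\vP}^-$, where $\bullet = \ClosedDay{\vParr}$ is the commutative pomonoid used in the Chu construction. (Dually, (1)--(2) together just say $\ChuEmbed(\ClosedLowerEmbed(\vP)) \sqsubseteq \sem{\vDual\vP}$.) I would prove (1) and (2) by a single simultaneous induction on $\vP$. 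Before starting, I would record the computational fact that $k = \{\,\vR \mid \vR\vInferFrom*\vUnit\,\}$: membership in $k$ is exactly normal provability, because \cref{rule:Tidy} collapses any $\vWith$-combination of $\vUnit$-provable structures, so the $+$-closure adds nothing to the plain lower set of $\vUnit$.

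The base cases and the ``native'' connectives are routine. For a positive atom, both (1) and (2) reduce to $\ClosedLowerEmbed(\vDual\va)\bullet\ClosedLowerEmbed(\va) = \ClosedLowerEmbed(\vDual\va\vParr\va)\leq k$, which holds by \cref{rule:AtomInteract}; for a negative atom and for $\vUnit$ the two inequalities are reflexivity and the residual counit. For $\vParr$, $\vSeq$, and $\vWith$---the connectives that the frame \NMAV{} carries directly---I would use, respectively, $\ClosedLowerEmbed(\vP\vParr\vQ) = \ClosedLowerEmbed(\vP)\bullet\ClosedLowerEmbed(\vQ)$ (\Cref{prop:closed-monoid-distrib}), that $\ClosedLowerEmbed(\vP\vSeq\vQ)$ lies below the corresponding $\vSeq$-Day product (\Cref{prop:closed-monoid-duoidal}), and $\ClosedLowerEmbed(\vP\vWith\vQ)\subseteq\ClosedLowerEmbed(\vP)\lor\ClosedLowerEmbed(\vQ)$ (\Cref{prop:closed-eta-preserve-joins}). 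In each case (2) follows from the inductive (2)'s by monotonicity, while (1) follows from the inductive hypotheses together with the relevant structural law on $\ClosedLowerSet{\NMAV}$: the preserved duoidal relationship of $\bullet$ over the $\vSeq$-Day product (plus $k$ being the $\vSeq$-Day unit) for $\vSeq$; that $\bullet$ preserves joins and $k\lor k = k$ for $\vWith$; and a residuation manipulation for $\vParr$. The $\vPlus$ case has no native counterpart, but \cref{rule:Left} and \cref{rule:Right} give $\ClosedLowerEmbed(\vP\vPlus\vQ)\subseteq\ClosedLowerEmbed(\vP)\cap\ClosedLowerEmbed(\vQ)$, from which both halves follow.

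The main obstacle is the $\vTens$ case, where the frame again has no $\vTens$ structure and the duality must be mediated by \cref{rule:Switch}. Here (2) splits into $\ClosedLowerEmbed(\vP\vTens\vQ)\bullet\sem{\vQ}^+\leq\sem{\vP}^-$ and its mirror image; for the first I apply a single \cref{rule:Switch} step $(\vP\vTens\vQ)\vParr b\vInferFrom*\vP\vTens(\vQ\vParr b)$ followed by the inductive (1) for $\vQ$ (which gives $\vQ\vParr b\vInferFrom*\vUnit$) to land in $\vP\vTens\vUnit\vEquiv\vP$, whence $\leq\sem{\vP}^-$ by the inductive (2). The genuinely new ingredient is (1), $\sem{\vP}^+\bullet\sem{\vQ}^+\bullet\ClosedLowerEmbed(\vP\vTens\vQ)\leq k$: given generators $a\in\sem{\vP}^+$ and $b\in\sem{\vQ}^+$ with $\vP\vParr a\vInferFrom*\vUnit$ and $\vQ\vParr b\vInferFrom*\vUnit$ (from the inductive (1)'s), I apply \cref{rule:Switch} twice---once to bring $b$ alongside $\vQ$ and once to bring $a$ alongside $\vP$---to obtain
\[
a\vParr b\vParr(\vP\vTens\vQ)\;\vInferFrom*\;(\vP\vParr a)\vTens(\vQ\vParr b)\;\vInferFrom*\;\vUnit\vTens\vUnit\;\vEquiv\;\vUnit .
\]
This is exactly the interaction that \cref{rule:Switch} exists to supply: it lets the two independent refutations $a$ and $b$ reach $\vP$ and $\vQ$ across the $\vTens$, collapsing the whole structure to a unit. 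I expect this to be the crux of the entire proposition.

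Finally, throughout the inductive step I would discharge the bookkeeping that elements of $\ClosedLowerSet{\NMAV}$ are $+$-closed lower sets rather than single structures: since $k$ is closed, an inequality $X\leq k$ reduces to checking the generators of the underlying Day product, and any $\vWith$-combinations arising from $\ClosedLowerEmbed$ or from the closure are distributed back out using \cref{rule:External} and recollapsed using \cref{rule:Tidy}. These manipulations are routine---and are the part most cleanly handled in the Agda development---while the conceptual content is concentrated entirely in the $\vTens$ case above.
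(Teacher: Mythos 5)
Your proposal is correct, and its skeleton is the paper's proof: unfold the Chu inequality into the two component inclusions, note that $\ClosedDay{\vUnit}$ is exactly the set of normally provable structures, and induct on $\vP$ using \cref{rule:AtomInteract} for atoms, preservation of the monoid/join structure by $\ClosedLowerEmbed$ for $\vParr$, $\vSeq$, $\vWith$, \cref{rule:Left}/\cref{rule:Right} for $\vPlus$, and \cref{rule:Switch} for $\vTens$.

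The one real difference is the decomposition, and it is worth seeing because it shows that what you call ``the crux'' is avoidable. You prove (1) and (2) by simultaneous induction, so every case needs a bespoke argument for (1): the duoidal law at $\vSeq$, join-preservation of $\ClosedDay{\bullet}$ at $\vWith$, and at $\vTens$ the double-\cref{rule:Switch} derivation $a\vParr b\vParr(\vP\vTens\vQ)\vInferFrom*(\vP\vParr a)\vTens(\vQ\vParr b)\vInferFrom*\vUnit$. The paper instead derives (1) from (2) once and for all, uniformly in $\vP$: by (2) and monotonicity, $\sem{\vP}^+\ClosedDay{\bullet}\ClosedLowerEmbed(\vP)\subseteq\sem{\vP}^+\ClosedDay{\bullet}\sem{\vP}^-\subseteq\ClosedDay{\vUnit}$, the last inclusion being the well-formedness condition satisfied by every Chu element; residuation then gives (1). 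Consequently only (2) needs induction, and \cref{rule:Switch} enters exactly once, in the lemma $\ClosedLowerEmbed(\vQ\vTens\vR)\ClosedDay{\bullet}(\ClosedLowerEmbed(\vQ)\rightblackspoon^+\ClosedDay{\vUnit})\subseteq\ClosedLowerEmbed(\vR)$ --- which is precisely your (2)-case for $\vTens$, proved the same way (your $b$ with $\vQ\vParr b\vInferFrom*\vUnit$ is exactly an element of $\ClosedLowerEmbed(\vQ)\rightblackspoon^+\ClosedDay{\vUnit}$). So your two-\cref{rule:Switch} interaction argument is correct but redundant, and the single-\cref{rule:Switch} lemma is the only genuinely new ingredient. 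Everything else --- the characterisation of $k$, reduction to generators against the closed set $\ClosedDay{\vUnit}$, the \cref{rule:External}/\cref{rule:Tidy} bookkeeping --- matches the paper's machinery; and your $\vPlus$ case via intersection is in fact the correct form of the paper's displayed chain, which writes a join where the Chu definition of $\oplus$ (negative components intersect) requires a meet.
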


\begin{proof}
  By \Cref{defn:chu}, this statement comprises two inclusions
  between pairs of $+$-closed lower sets:
  \begin{enumerate}
    \item $\ClosedLowerEmbed(P) \subseteq \sem{\vP}^-$
    \item $\sem{\vP}^+ \subseteq \ClosedLowerEmbed(P) \rightblackspoon^+ \ClosedDay{I}$
  \end{enumerate}
  We prove the second assuming the first. It suffices to prove that
  $\sem{\vP}^+ \ClosedDay{\bullet} \ClosedLowerEmbed(P) \subseteq
    \ClosedDay{I}$, which follows from the first part and the property
  of all Chu-elements that
  $\sem{\vP}^+ \ClosedDay{\bullet} \sem{\vP}^- \subseteq \ClosedDay{I}$.

  We prove the first part by induction on $\vP$.
  In the cases when $\vP = \vUnit$ or $\vP = \vDual\va$, we already have
  $\sem{\vP}^- = \ClosedLowerEmbed(P)$. When $\vP = \va$, we have
  $\sem{\va}^- = \ClosedLowerEmbed(\vDual\va)
    \rightblackspoon^+ \ClosedDay{I}$. It suffices to prove that
  $\ClosedLowerEmbed(\va) \ClosedDay{\bullet}
    \ClosedLowerEmbed(\vDual\va) \subseteq \ClosedDay{I}$, which
  follows from the preservation of monoid operations by
  $\ClosedLowerEmbed$ and the \cref{rule:AtomInteract} rule.

  When $\vP = \vQ\vParr\vR$, $\vQ\vWith\vR$, or $\vQ\vSeq\vR$, the result
  follows from preservation of the corresponding monoid structure by
  $\ClosedLowerEmbed$. For example,
  $\ClosedLowerEmbed(\vQ\vParr\vR) \subseteq \ClosedLowerEmbed(\vQ)
    \ClosedDay{\bullet} \ClosedLowerEmbed(\vR) \subseteq \sem{\vQ}^-
    \ClosedDay{\bullet} \sem{\vR}^- = \sem{\vQ\vParr\vR}^-$.

  When $\vP = \vQ\vPlus\vR$, we have
  $\ClosedLowerEmbed(\vQ\vPlus\vR) \subseteq \ClosedLowerEmbed(\vQ)$ and
  $\ClosedLowerEmbed(\vQ\vPlus\vR) \subseteq \ClosedLowerEmbed(\vR)$, by the
  \cref{rule:Left} and \cref{rule:Right} rules. Therefore,
  $\ClosedLowerEmbed(\vQ\vPlus\vR) \subseteq \ClosedLowerEmbed(\vQ) \lor
    \ClosedLowerEmbed(\vR) \subseteq \sem{\vQ}^- \lor \sem{\vR}^- = \sem{\vQ\vPlus\vR}^-$.

  When $\vP = \vQ\vTens\vR$, we have
  $\sem{\vQ\vTens\vR}^- = (\sem{\vQ}^+ \rightblackspoon \sem{\vR}^-) \land
    (\sem{\vR}^+ \rightblackspoon \sem{\vQ}^-)$. We prove inclusion in the
  left-hand side, the right-hand side is similar. The key property we
  need to prove is:
  \begin{equation}\label{eq:interaction-leq}
    \ClosedLowerEmbed(\vQ\vTens\vR) \ClosedDay{\bullet} (\ClosedLowerEmbed(Q) \rightblackspoon^+ \ClosedDay{I})
    \subseteq
    \ClosedLowerEmbed(R)
  \end{equation}
  Using the monoidality and monotonicity of $\va$, this inclusion
  is implied by the following inclusion in $\LowerSet{\NMAV}$:
  \begin{displaymath}
    \LowerEmbed(\vQ\vTens\vR) \Day{\bullet} (U(\ClosedLowerEmbed(Q)) \rightblackspoon \Day{I})
    \subseteq
    \LowerEmbed(R)
  \end{displaymath}
  which follows from the \cref{rule:Switch} rule of MAV and
  calculation. Using \ref{eq:interaction-leq}, and inclusion (ii)
  above, we can prove the inequality we need:
  \begin{displaymath}
    \ClosedLowerEmbed(\vQ\vTens\vR) \ClosedDay{\bullet} \sem{\vQ}^+
    \subseteq
    \ClosedLowerEmbed(\vQ\vTens\vR) \ClosedDay{\bullet} (\ClosedLowerEmbed(Q) \rightblackspoon^+ \ClosedDay{I})
    \subseteq
    \ClosedLowerEmbed(R)
    \subseteq
    \sem{\vR}^-
  \end{displaymath}
  Using the residuation property of $\ClosedDay{\bullet}$ we can conclude.
\end{proof}

\begin{theorem}\label[thm]{thm:cut-elim}
  If $\vP\vInferFrom*\vUnit$ in MAV, then there is a normal proof $\vP\vInferFrom*\vUnit$.
\end{theorem}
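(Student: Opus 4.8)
The plan is to instantiate the soundness theorem at the specific MAV-algebra built from the normal-proof frame and then \emph{read back} a normal proof from the resulting semantic inequality, in the style of Okada's completeness argument and of normalisation-by-evaluation. First I would apply \Cref{thm:soundness} to the MAV-algebra $\Chu(\ClosedLowerSet{\NMAV}, \ClosedDay{\vUnit})$ (well-defined by \Cref{thm:algebra-from-frame} and \Cref{prop:nmav-frame}) under the valuation $V(\va) = \ChuEmbed(\ClosedLowerEmbed(\vDual\va))$: from the hypothesis $\vP \vInferFrom* \vUnit$ we obtain $I \sqsubseteq \sem{\vP}$. Combining this with \Cref{prop:embedding-sem}, namely $\sem{\vP} \sqsubseteq \lnot(\ChuEmbed(\ClosedLowerEmbed(\vP)))$, transitivity of $\sqsubseteq$ yields
\[
  I \sqsubseteq \lnot(\ChuEmbed(\ClosedLowerEmbed(\vP))).
\]

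Next I would unpack this inequality of Chu pairs. Since here $i = k = \ClosedDay{\vUnit}$, we have $I = (\ClosedDay{\vUnit}, \ClosedDay{\vUnit})$, while by \Cref{defn:chu} the right-hand side is $(\ClosedLowerEmbed(\vP) \rightblackspoon^+ \ClosedDay{\vUnit},\; \ClosedLowerEmbed(\vP))$. The Chu ordering requires inclusion on positive components and the \emph{reverse} inclusion on negative components; the latter gives exactly $\ClosedLowerEmbed(\vP) \subseteq \ClosedDay{\vUnit}$. Because $\vP \in \ClosedLowerEmbed(\vP)$ through the $\mathsf{leaf}$ constructor, this forces $\vP \in \ClosedDay{\vUnit} = \alpha(\eta(\vUnit))$. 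By the definition of $\alpha$, membership means there is a finite $\vWith$-context $c$ whose leaves $\vQ_1, \dots, \vQ_n$ all lie in $\eta(\vUnit)$ --- i.e.\ each admits a normal derivation $\vQ_i \vInferFrom* \vUnit$ --- such that $\vP \vInferFrom* \mathit{sum}(c)$ by a normal derivation, where $\mathit{sum}(c)$ is the associated $\vWith$-combination of the $\vQ_i$.

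Finally I would collapse the $\vWith$-context to the unit by induction on $c$, using only analytic rules. The $\mathsf{leaf}$ case is the assumed derivation $\vQ_i \vInferFrom* \vUnit$; for $\mathsf{node}(c, d)$ the inductive hypotheses give $\mathit{sum}(c) \vInferFrom* \vUnit$ and $\mathit{sum}(d) \vInferFrom* \vUnit$, so by \cref{rule:Mono} we obtain $\mathit{sum}(c) \vWith \mathit{sum}(d) \vInferFrom* \vUnit \vWith \vUnit$, and a single application of \cref{rule:Tidy} closes this to $\vUnit$. As \cref{rule:Mono} and \cref{rule:Tidy} are analytic, composing this with $\vP \vInferFrom* \mathit{sum}(c)$ produces the desired normal proof $\vP \vInferFrom* \vUnit$.

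The genuinely hard content has already been discharged in \Cref{prop:embedding-sem} and the frame-to-algebra construction underlying it; what remains is the read-back, and I expect the final step to be its only real subtlety. Joins in $\ClosedLowerSet{\NMAV}$ are created freely by the $+$-closure $\alpha$, so a priori the semantics only tells us that $\vP$ dominates a $\vWith$-tree of unit-provable structures rather than the unit directly. It is precisely the availability of \cref{rule:Tidy} as an analytic rule that lets these additive branchings be reabsorbed, converting the semantic membership $\vP \in \ClosedDay{\vUnit}$ back into a genuine syntactic normal proof.
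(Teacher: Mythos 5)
Your proposal is correct and follows essentially the same route as the paper: soundness instantiated at the algebra $\Chu(\ClosedLowerSet{\NMAV}, \ClosedDay{\vUnit})$, composition with \Cref{prop:embedding-sem}, unpacking the Chu ordering on the negative component to get $\ClosedLowerEmbed(\vP) \subseteq \ClosedDay{\vUnit}$, and then reading back a normal proof from $\vP \in \ClosedDay{\vUnit}$. The only difference is that your final tree-collapsing induction with \cref{rule:Mono} and \cref{rule:Tidy} spells out what the paper compresses into ``by definition'': since \cref{rule:Tidy} gives $\vUnit \vWith \vUnit \leq \vUnit$ in \NMAV, the lower set $\eta(\vUnit)$ is already $+$-closed, so $\ClosedDay{\vUnit} = \alpha(\eta(\vUnit)) = \eta(\vUnit)$ and membership immediately yields a normal derivation $\vP \vInferFrom* \vUnit$.
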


\begin{proof}
  By \Cref{thm:soundness}, $\vP\vInferFrom*\vUnit$ in MAV
  implies $I \sqsubseteq \sem{\vP}$. Combined with \Cref{prop:embedding-sem}, we have
  $I \sqsubseteq \lnot \ChuEmbed(\LowerEmbed(\vP)$. By
  \Cref{defn:chu} of the ordering of Chu elements, we have
  $\ClosedLowerEmbed(\vP) \subseteq \ClosedDay{I}$. Since
  $P \in \ClosedLowerEmbed(\vP)$, we have $\vP \in \ClosedDay{I}$, which
  by definition means that there is a normal proof
  $\vP\vInferFrom*\vUnit$.
\end{proof}

Another consequence of \Cref{prop:embedding-sem} is that
MAV is complete for the MAV-frame semantics:

\begin{theorem}\label[thm]{thm:completeness}
  MAV is complete for the MAV-frame semantics: if, for all MAV-frames
  $F$, $\sem{\vUnit} \sqsubseteq \sem{\vP}$ in
  $(\Chu(\ClosedLowerSet{F}, \ClosedDay{I}), \sqsubseteq)$, then
  $\vP\vInferFrom*\vUnit$.
\end{theorem}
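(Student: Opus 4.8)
The plan for \Cref{thm:completeness} is to observe that it follows from essentially the same ingredients as \Cref{thm:cut-elim}, but run in a different direction: instead of using a soundness fact to get into the normal-proof algebra, I would use the specific normal-proof MAV-frame \NMAV{} as a witnessing frame and then \emph{read back} a normal proof from the hypothesis. The statement is a completeness result, so the natural strategy is contrapositive/instantiation: assume the semantic inequality $\sem{\vUnit} \sqsubseteq \sem{\vP}$ holds in \emph{every} MAV-frame-generated algebra, and in particular specialise to the canonical frame $F = \NMAV$ with the valuation $V(\va) = \ChuEmbed(\ClosedLowerEmbed(\vDual\va))$ used throughout \Cref{sec:mav-cut-elimination}.

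First I would instantiate the universally quantified hypothesis at $F = \NMAV$, so that $I \sqsubseteq \sem{\vP}$ holds in $\Chu(\ClosedLowerSet{\NMAV}, \ClosedDay{\vUnit})$. From here the argument is then \emph{identical} to the proof of \Cref{thm:cut-elim}: combining $I \sqsubseteq \sem{\vP}$ with \Cref{prop:embedding-sem}, which gives $\sem{\vP} \sqsubseteq \lnot(\ChuEmbed(\ClosedLowerEmbed(\vP)))$, I obtain $I \sqsubseteq \lnot(\ChuEmbed(\ClosedLowerEmbed(\vP)))$. Unfolding the Chu ordering from \Cref{defn:chu} turns this into the inclusion $\ClosedLowerEmbed(\vP) \subseteq \ClosedDay{I}$, and since $\vP \in \ClosedLowerEmbed(\vP)$ (because $\ClosedLowerEmbed(\vP) = \alpha(\eta(\vP))$ contains $\vP$ via the $\mathsf{leaf}$ constructor), we conclude $\vP \in \ClosedDay{I}$. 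By the very definition of the normal-proof MAV-frame and of $\ClosedDay{I}$, membership $\vP \in \ClosedDay{I}$ unwinds to the existence of a (normal) derivation $\vP \vInferFrom* \vUnit$, which is what we need.

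Since the theorem only asks for \emph{some} derivation $\vP \vInferFrom* \vUnit$ rather than a normal one, the conclusion is in fact weaker than that of \Cref{thm:cut-elim}; every normal proof is a proof, so the normal proof extracted above immediately witnesses ordinary MAV-provability. In effect, completeness is a corollary of cut-elimination once we have the key embedding \Cref{prop:embedding-sem}, and the two theorems differ only in whether the starting inequality comes from \Cref{thm:soundness} (for cut-elimination) or from instantiating the semantic hypothesis at the canonical frame (for completeness).

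I expect no serious obstacle here, as all the real work has already been done in \Cref{prop:embedding-sem} and \Cref{thm:algebra-from-frame}. The one point requiring care is the legitimacy of the instantiation step: one must check that $\NMAV$ is genuinely an MAV-frame (guaranteed by \Cref{prop:nmav-frame}) and that the chosen valuation $V$ makes the read-back argument go through, exactly as set up at the start of \Cref{sec:mav-cut-elimination}. Everything downstream is then a verbatim replay of the \Cref{thm:cut-elim} calculation, so the proof is genuinely short.
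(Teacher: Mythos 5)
Your proposal is correct and follows exactly the route the paper intends: the paper presents \Cref{thm:completeness} as ``another consequence of \Cref{prop:embedding-sem}'', i.e.\ instantiate the hypothesis at the normal-proof frame \NMAV{} with the canonical valuation, apply \Cref{prop:embedding-sem} to get $\ClosedLowerEmbed(\vP) \subseteq \ClosedDay{I}$, and read back a normal (hence ordinary) proof of $\vP$, just as in \Cref{thm:cut-elim}. Your attention to the two checkpoints --- that \NMAV{} is a legitimate frame (\Cref{prop:nmav-frame}) and that $\vP \in \ClosedLowerEmbed(\vP)$ --- is exactly what the paper's compressed presentation leaves implicit.
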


\section{Mechanisation in Agda}
\label{sec:mechanisation}
We formalised the proofs in the paper in Agda~\cite{Agda264}.
The source code is available at the following URL:
\begin{center}
  \url{https://github.com/bobatkey/semantic-cut-elimination}
\end{center}
Furthermore, a hyperlinked HTML rendition of the source can be browsed at the following URL:
\begin{center}
  \url{\AgdaBaseUrl}
\end{center}
In \Cref{sec:table-of-statments}, we provide a guide to the mechanisation relating the Definitions, Propositions, and Theorems in the previous sections to the Agda definitions in the mechanisation.
The formalisation uses setoids to represent sets, and reuses definitions from the Agda Standard Library~\cite{AgdaStdlib20} where appropriate.

We did not attempt to formalise Horne's syntactic proof of generalised cut-elimination directly. We suspect that this would likely be quite involved, due to the widespread and implicit use of syntactic equalities when manipulating structures, as well as the construction of the relevant termination measures. We found that the semantic constructions were relatively straightforward to formalise in Agda.

In addition to increasing the confidence in our results, a key benefit of the formalised proof in a proof assistant for constructive proof such as Agda is that the proof normalisation procedure defined by \Cref{thm:cut-elim} is executable.
As an example, we have normalised the one-step proof below:
\begin{displaymath}
  ((\vUnit \vPlus \vUnit) \vSeq (\vUnit \vWith \vUnit))
  \vParr
  ((\vUnit \vWith \vUnit) \vSeq (\vUnit \vPlus \vUnit))
  \xlongrightarrow{\cref{rule:Interact}}
  \vUnit
\end{displaymath}
The proof normalises to a 38-step normal proof, of which 9 are inference steps, and the remainder are (sometimes spurious) equalities.
The example can be found at the following URL:
\begin{center}
  \url{\AgdaBaseUrl/MAV.Example.html}
\end{center}

\section{Extensions and Future Work}\label{sec:future-work}
We have presented a semantic proof of generalised Cut elimination for the Multiplicative-Additive System Virtual (MAV), which reduces Horne's approximately 41 page proof to a 7 page proof.

Our proof technique is modular, and can be adapted with relative ease to a variety of related systems.
To evidence this claim, we have adapted our Agda formalisation to prove generalised cut elimination to the following systems:
\begin{description}
  \item[BV]
        The basic system does not have the additives (\ie $\vWith$ and $\vPlus$).
        The proof is a straightforward restriction of our proof for MAV,
        but only relies on lower sets, rather than $+$-closed lower sets.
        The source code of the proof is available at the following URL:
        \begin{center}
          \url{\AgdaBaseUrl/BV.CutElim.html}
        \end{center}
  \item[MAUV]
        The multiplicative-additive-unital system adds the additive units (\ie $\top$ and $\mathbf{0}$, using Girard's notation).
        The proof is a straightforward extension of our proof for MAV, and requires the use of lower sets which are $0$-closed as well as $+$-closed.
        The source code of the proof is available at the following URL:
        \begin{center}
          \url{\AgdaBaseUrl/MAUV.CutElim.html}
        \end{center}
  \item[NEL]
        Non-commutative exponential logic~\cite{GuglielmiS11} extends BV with the exponentials (\ie $!$ and $?$, using Girard's notation).
        The proof is a straightforward extension of our proof for MAV, and requires that we:
        \begin{enumerate*}
          \item add \emph{near-exponentials}, which do not satisfy monoidality, to BV-frames;
          \item construct exponentials on lower sets, which adds monoidality;
          \item construct exponentials on Chu spaces, which adds duality; and
          \item extend the main proposition (\Cref{prop:embedding-sem}) to account for the new connectives, which requires lemmas similar to those for $\vTens$ and $\vParr$.
        \end{enumerate*}
        The source code of the proof is available at the following URL:
        \begin{center}
          \url{\AgdaBaseUrl/NEL.CutElim.html}
        \end{center}
  \item[MAUVE]
    The combination of all the previous systems is MAUVE: \textbf{M}ultiplicative-\textbf{A}dditive-\textbf{U}nital System \textbf{V}irtual with \textbf{E}xponentials. The construction of exponentials on lower sets extends to $0$- and $+$-closed ideals. The source code of the proof is available at the following URL:
    \begin{center}
      \url{\AgdaBaseUrl/MAUVE.CutElim.html}
    \end{center}
\end{description}
The work opens up several paths for future work.
The theory developed here for lifting Day pomonoids to $+$-closed lower sets enables alternative Cut-elimination proofs for other substructural logics, such as MALL and Bunched Implications. (Okada's technique has already been adapted to Bunched Implications by Frumin~\cite{Frumin22:psc}.)
We find the technique of using $+$-closed lower sets more revealing in how the join structure is preserved than the use of impredicative closure operators in Frumin's proof or double negation closure in Okada's.

We plan to investigate extensions of MAV with a Kleene Star operator, which can be seen as the exponential for the $\vSeq$ connective. Adding a Kleene Star would tighten the connection with Concurrent Kleene Algebras we highlighted in~\Cref{remark:cka}. It would interesting to see to what extent MAV can be seen as a logic for processes represented as elements of MAV-frames.
More generally, we plan to investigate fixpoint operators following Baelde~\cite{Baelde12} and De, Jafarrahmani and Saurin~\cite{De22:psc}.
The latters' use of Okada's technique is not compatible with Agda's type theory, as it relies on impredicativity to construct fixpoints with the double negation closure. We believe that our more direct predicative technique will be able to use Agda's inductive and coinductive types.

Lastly, we plan to extend our semantics of BV and MAV to a category-theoretic semantics that considers equalities between proofs as well as provability.
Such a semantics ought to be useful for treating MAV as a session-typed language, as considered by Ciobanu and Horne~\cite{Ciobanu_2016}.
The necessary analogue of MAV-algebras has already been investigated by Blute, Panangaden, and Slavnov~\cite{Blute_2010} as BV-categories, which are Aguiar and Mahajan's 2-monoidal (or duoidal) categories~\cite{Aguiar_2010} extended with duality.
The key task will be to categorify the constructions in this paper to show how the categorical analogue of MAV-frames induces MAV-categories.

\begin{acknowledgement}
  We would like to thank Ross Horne for helpful comments and pointers
  to related work. This work was funded by the
  \href{https://www.ukri.org/about-us/epsrc/}{UKRI Engineering and
    Physical Sciences Research Council (EPSRC)}, grant number
  \href{https://gow.epsrc.ukri.org/NGBOViewGrant.aspx?GrantRef=EP/T026960/1}{EP/T026960/1}
  ``\emph{AISEC: AI Secure and Explainable by Construction}''.
\end{acknowledgement}

\bibliographystyle{entics}
\bibliography{paper}

\appendix

\section{Statements and corresponding Agda definitions}
\label{sec:table-of-statments}

The table below lists the main definitions, propositions, and theorems in the paper together with the qualified name of the corresponding Agda definition.
The names of the Agda definitions are hyperlinked to the HTML rendition of the source code, which is available at the following URL:
\begin{center}
  \url{\AgdaBaseUrl}
\end{center}
If the version of the paper you are reading does not support hyperlinks, you can reconstruct the URL to any definition by taking the URL above, adding a \texttt{/}, the module path followed by \texttt{.html}, and the definition name prefixed with a \texttt{\#}.
For instance, the URL to the definition of \texttt{\`{}¬\_} is:
\begin{center}
  \texttt{\AgdaBaseUrl/MAV.Structure.html\#\`{}¬\_}
\end{center}
The table has three columns.
The first gives an intuitive name to the definition, proposition, or theorem statement.
The second links to its statement in the paper, if applicable.
The third gives the name of the corresponding Agda definition, and links to its statement online.
\vspace*{2ex}
\begin{longtable}[c]{lll}
  \caption{Statements and corresponding Agda definitions.}
  \label[tbl]{tbl:index}
  \\
  \longtableheader{\Cref{sec:mav-syntax}}%
  \\
  Structures
   & $\vP,\vQ,\vR,\vS$
   & \AgdaRef{MAV.Structure}+Structure+
  \\
  Duality
   & $\vDual\vP$
   & \AgdaRef{MAV.Structure}+`¬_+[\`{}¬\_]
  \\
  Symmetric MAV
   &
   &
  \\
  \quad Equality
   & $\vEquiv$
   & \AgdaRef{MAV.Symmetric}+_≃_+
  \\
  \quad Inference
   & $\vInferFrom$
   & \AgdaRef{MAV.Symmetric}+_⟶_+
  \\
  \quad Derivation
   & $\vInferFrom*$
   & \AgdaRef{MAV.Symmetric}+_⟶⋆_+
  \\
  \quad Invertible derivation
   & $\vInferFromTo*$
   & \AgdaRef{MAV.Symmetric}+_⟷⋆_+
  \\
  Normal MAV
   & \Cref{defn:normal}
   &
  \\
  \quad Equality
   & $\vEquiv$
   & \AgdaRef{MAV.Base}+_≃_+
  \\
  \quad Inference
   & $\vInferFrom$
   & \AgdaRef{MAV.Base}+_⟶_+
  \\
  \quad Derivation
   & $\vInferFrom*$
   & \AgdaRef{MAV.Base}+_⟶⋆_+
  \\
  \quad Invertible derivation
   & $\vInferFromTo*$
   & \AgdaRef{MAV.Base}+_⟷⋆_+
  \\[2ex]
  \longtableheader{\Cref{sec:mav-semantics-preliminaries}}%
  \\
  Pomonoids
   & \Cref{defn:pomonoid}
   & \AgdaRef*{Algebra.Ordered.Structures}+IsPomonoid+
  \\
  \quad if Commutative
   & \Cref{defn:pomonoid}
   & \AgdaRef*{Algebra.Ordered.Structures}+IsCommutativePomonoid+
  \\
  \quad if Residuated
   & \Cref{defn:residual}
   & \AgdaRef*{Algebra.Ordered.Structures}+IsResiduatedCommutativePomonoid+
  \\
  \quad if Duoidal
   & \Cref{defn:duoidal}
   & \AgdaRef*{Algebra.Ordered.Structures}+IsDuoidal+
  \\
  $*$-autonomous partial order
   & \Cref{defn:star-autonomous}
   & \AgdaRef*{Algebra.Ordered.Structures}+IsStarAutonomous+
  \\
  \longtablemodule{Algebra.Orderd.Structures}%
  \\[2ex]
  \longtableheader{\Cref{sec:mav-algebras}}%
  \\
  MAV-algebra
   & \Cref{defn:mav-algebra}
   & \AgdaRef{MAV.Model}+Model+
  \\
  Interpretation
   & \Cref{defn:mav-interpretation}
   & \AgdaRef{MAV.Interpretation}+⟦_⟧+
  \\
  Soundness
   & \Cref{thm:soundness}
   &
  \\
  \quad for Equality
   &
   & \AgdaRef{MAV.Interpretation}+⟦_⟧eq+
  \\
  \quad for Inference
   &
   & \AgdaRef{MAV.Interpretation}+⟦_⟧step+
  \\
  \quad for Derivation
   &
   & \AgdaRef{MAV.Interpretation}+⟦_⟧steps+
  \\[2ex]
  \longtableheader{\Cref{sec:mav-frames}}%
  \\
  MAV-frame
   & \Cref{defn:mav-frame}
   & \AgdaRef{MAV.Frame}+Frame+
  \\
  Normal MAV-frame
   & \Cref{prop:nmav-frame}
   & \AgdaRef{MAV.Base}+frame+
  \\[2ex]
  \longtableheader{\Cref{sec:lower-sets}}%
  \\
  Lower sets
   & \Cref{defn:lower-set}
   & \AgdaRef*{Algebra.Ordered.Construction.LowerSet}+LowerSet+
  \\
  Day pomonoid
   & \Cref{prop:day-construction}
   & \texttt{module} \AgdaRef*{Algebra.Ordered.Construction.LowerSet}+Day+
  \\
  \quad if Commutative
   & \Cref{prop:day-construction}
   & \texttt{module} \AgdaRef*{Algebra.Ordered.Construction.LowerSet}+DayCommutative+
  \\
  \quad if Duoidal
   & \Cref{prop:lower-set-duoidal}
   & \texttt{module} \AgdaRef*{Algebra.Ordered.Construction.LowerSet}+DayDuoidal+
  \\
  \longtablemodule{Algebra.Ordered.Construction.LowerSet}%
  \\[2ex]
  \longtableheader{\Cref{sec:closed-lower-sets}}%
  \\
  +-closed lower sets
   & \Cref{defn:closed-lower-set}
   & \AgdaRef*{Algebra.Ordered.Construction.Ideal}+Ideal+
  \\
  +-closed Day pomonoid
   &
   &
  \\
  \quad if Commutative
   & \Cref{prop:closed-monoid-distrib}
   & \texttt{module} \AgdaRef*{Algebra.Ordered.Construction.Ideal}+DayCommutative+
  \\
  \quad if Duoidal
   & \Cref{prop:closed-monoid-duoidal}
   & \texttt{module} \AgdaRef*{Algebra.Ordered.Construction.Ideal}+DayDuoidal+
  \\
  \longtablemodule{Algebra.Ordered.Construction.Ideal}%
  \\[2ex]
  \longtableheader{\Cref{sec:chu}}%
  \\
  Chu construction
   & \Cref{defn:chu}
   & \AgdaRef*{Algebra.Ordered.Construction.Chu}+Construction.Chu+
  \\
  \quad for duoidal pomonoids
   & \Cref{prop:chu-monoid-duoidal}
   & \texttt{module} \AgdaRef*{Algebra.Ordered.Construction.Chu}+Construction.SelfDual+
  \\
  \longtablemodule{Algebra.Ordered.Construction.Chu}%
  \\[2ex]
  \longtableheader{\Cref{sec:algebra-from-frame}}%
  \\
  MAV-algebras from MAV-frames
   & \Cref{thm:algebra-from-frame}
   & \AgdaRef{MAV.Frame}+FrameModel.model+
  \\[2ex]
  \longtableheader{\Cref{sec:mav-cut-elimination}}%
  \\
  Main lemma
   & \Cref{prop:embedding-sem}
   & \AgdaRef{MAV.CutElim}+main-lemma+
  \\
  Generalised cut-elimination
   & \Cref{thm:cut-elim}
   & \AgdaRef{MAV.CutElim}+cut-elim+
\end{longtable}

\end{document}